	\theoremstyle{plain}
		\newtheorem{theorem}{Theorem}
		\newtheorem{lem}[theorem]{Lemma}
		\newtheorem{prop}[theorem]{Proposition}
	\theoremstyle{plain}
		\newtheorem{rmk}[theorem]{Remark}
		\newtheorem{exmpl}{Example}
\newcommand{\rep}[2]{
	$\mathbf{#1} \otimes \mathbf{#2}^o$
}
\def\A{\mathcal{A}}
\def\H{\mathcal{H}}
\def\com{\mathbb{C}}
\def\F{\mathbb{F}}
\def\R{\mathbb{R}}
\def\C{\com}
\def\Qu{\mathbb{H}}
\def\Z{\mathbb{Z}}
\DeclareMathOperator{\End}{End}
\DeclareMathOperator{\tr}{tr}
\DeclareMathOperator{\id}{id}
\DeclareMathOperator{\ad}{ad}
\DeclareMathOperator{\diag}{diag}
\title{Going beyond the Standard Model \\with noncommutative geometry}
\author{Thijs van den Broek%
  \thanks{Electronic address: \texttt{T.vandenBroek@science.ru.nl}}}
\affil{Nikhef, Science Park Amsterdam 105, 1098 XG Amsterdam / Radboud University Nijmegen} 
\author{Walter D.~van Suijlekom%
  \thanks{Electronic address: \texttt{waltervs@math.ru.nl}}}
\affil{Radboud University Nijmegen, Institute for Mathematics, Astrophysics and Particle Physics,
Faculty of Science, Heyendaalseweg 135, 6525 AJ Nijmegen, The Netherlands}
\date{\today}
\begin{document}
\maketitle

\begin{abstract}
The derivation of the full Standard Model from noncommutative geometry has been a promising sign for possible applications of the latter in High Energy Physics. Many believe, however, that the Standard Model cannot be the final answer. We translate several demands whose origin lies in physics to the context of noncommutative geometry and use these to put constraints on the fermionic content of models. We show that the Standard Model only satisfies these demands provided it has a right-handed neutrino in each `generation'. Furthermore, we show that the demands can be met upon extending the SM with a copy of the representation $\mathbf{2} \otimes \overline{\mathbf{1}}^o$, but this has consequences for the number of particle generations. We finally prove that the Minimal Supersymmetric Standard Model is not among the models that satisfy our constraints, but we pose a solution that is a slight extension of the MSSM.
\end{abstract}

\section{Introduction}

The Standard Model (SM) of elementary particles has been tremendously successful in explaining the world at the smallest length scales that can currently be probed \cite{SSZ02}. Yet, it leaves many feeling a bit uneasy for some of its properties appear to be rather ad hoc; few people believe that we have fully \emph{understood} the Standard Model. The application of noncommutative geometry \cite{C94} (NCG) to the subatomic realm might over time increase our understanding of the Standard Model. A line of thought that started with the Connes-Lott model \cite{CL89} culminated in a geometric description \cite{CCM} of the full Standard Model. It derives both the correct particle content of the Standard Model, extended with right-handed neutrinos ($\nu$SM), as its full action by employing the principles of NCG and adding as little extra input as possible. On top of that, this description allows for a prediction of the Higgs boson mass \cite{CCM, CC12}. NCG might thus prove itself a valuable tool for model building in High Energy Physics. \\

Considerable effort has already been spent on a classification of all the possible models using the demands that various mathematical and physical arguments put on noncommutative geometries, such as \cite{ISS03,JS05,JSS05,JS08,JS09} and also \cite{CC08}. The approach we take in this article is to exploit some of the more recent developments (see \cite{CCM}) to put constraints on all possible SM extensions. We will demand from any model that
	\begin{enumerate}
		\item the gauge Lie algebra associated to the noncommutative geometry is that of Standard Model (\S \ref{sec:alg});
		\item the particle content contains at least one copy of each of the particles that the Standard Model features (\S \ref{sec:alg});
		\item the hypercharges of the particles are such that there is no chiral gauge anomaly (\S \ref{sec:anomalies});
		\item the values of the coupling constants of the electromagnetic, weak and strong forces are such that they satisfy the GUT-relation\footnote{This relation not only appears in the context of $SU(5)$ Grand Unified Theory, but is also a feature of $SO(10)$ and $E_6$ theories \cite{PL81}, \cite{CEG77}.}
			\begin{align}
			\frac{5}{3}g_1^2 = g_2^2= g_3^2\label{eq:gutrel}
			\end{align}
			(\S \ref{sec:gut}). It is in fact this relation ---coming out naturally in the description of the SM from NCG--- that allows one to determine a scale on which the theory `lives'. Consequently it plays a vital role in the prediction for the value of the Higgs mass. 
	\end{enumerate}
Together these four demands lead to a number of relations between the multiplicities of the particles, which can be used to constrain the number of viable models. We recover the Standard Model (plus a right-handed neutrino in each generation) using these relations in \S \ref{sec:sm} and finally turn our attention to supersymmetric variants (\S \ref{sec:susy}). 
We must add however, that the scope of the application of these results is much broader than supersymmetry alone.

\section{Preliminaries}

At the very heart of NCG lies the notion of a \emph{spectral triple}, describing a \emph{noncommutative manifold}. It is a triple $(\A, \H, D)$, where $\A$ is a unital $*$-algebra that is represented as bounded operators on a Hilbert space $\H$ on which a \emph{Dirac operator} $D$ acts. The latter is an (unbounded) self-adjoint operator that has compact resolvent and in addition satisfies $[D, a] \in B(\H)\ \forall\ a \in \A$.  
\begin{itemize}
\item We call a spectral triple \emph{even} if there exists a grading $\gamma : \H\to\H$, with $[\gamma, a] = 0\ \forall\ a \in \A$ and $\gamma D = - D\gamma$.
\item We call a spectral triple \emph{real} if there exists an anti-unitary \emph{real structure} $J : \H \to \H$ satisfying
\begin{align*}
	J^2 & = \pm 1, & JD &= \pm DJ. 
\end{align*}
The Dirac operator and real structure are required to be compatible via the \emph{first-order condition}
	$
		[[D, a], Jb^*J^*] = 0\ \forall\ a, b \in \A.
	$
\item If a spectral triple is both real and even there is the additional compatibility relation 
\begin{align*}
	J\gamma = \pm \gamma J.
\end{align*}
The eight different combinations for the tree signs above determine the \emph{KO-dimension} of the spectral triple. For more details we refer to \cite{CM}.
\end{itemize}

This is a rather abstract notion which we will try to make more concrete by providing two examples that play a key role in the application of NCG to particle physics. 

\begin{exmpl}{(Canonical spectral triple)}\label{ex:canon}
	The triple 
	$
		(\A, \H, D) = (C^{\infty}(M), L^2(M, S), \slashed{\partial} := i\slashed{\nabla}^S)
	$
	serves as the motivating example of a spectral triple. Here $M$ is a compact Riemannian manifold that has a spin structure and $L^2(M, S)$ denotes the square-integrable sections of the corresponding spinor bundle. The Dirac operator $\slashed{\partial}$ is associated to the unique spin connection which in turn is derived from the Levi-Civita connection on $M$. This spectral triple can be extended by a real structure $J_M$ (`charge conjugation') and ---when $\dim M$ is even--- a grading $\gamma_M \equiv \gamma^{\dim M + 1}$ (`chirality'). The KO-dimension of a canonical spectral triple equals the dimension of $M$.	
\end{exmpl}

\subsection{Finite spectral triples}

A second example of spectral triples is that of a \emph{finite spectral triple}:

\begin{exmpl}{(Finite spectral triple \cite{PS96, KR97})}\label{ex:finite}
	For a finite-dimensional algebra $\A$, a finite-dimensional (left) module $\H$ of $\A$ and a symmetric linear operator $D : \H \to \H$, we call $(\A, \H, D)$ a \emph{finite spectral triple}. As in the general case some finite spectral triples are called real and/or even depending on whether there exists a $J$ (implementing a bimodule structure of $\H$) and/or $\gamma$ respectively. 
\end{exmpl}
To be more specific, we will write out in detail the properties of such finite spectral triples, restricting to the case of $KO$-dimension $6$ for which $J^2 =1, JD = DJ$ and $J \gamma = - \gamma J$:
\begin{itemize}
	\item[-] The finite-dimensional algebra is (by Wedderburn's Theorem) a direct sum of matrix algebras:
	\begin{align}
		\A = \bigoplus_{i}^K M_{N_i}(\mathbb{F}_i)\qquad \mathbb{F}_i = \mathbb{R}, \mathbb{C}, \mathbb{H}\label{eq:algebra},
	\end{align}	
	with componentwise multiplication: $(a_1, \ldots, a_K)(b_1, \ldots, b_K) = (a_1b_1, \ldots, a_Kb_K)$.
	
	\item[-] The Hilbert space is an $\A\otimes (\A)^o$-module, or, more specifically, a direct sum of tensor products of two defining representations $\mathbf{N}_i$ of $M_{N_i}(\mathbb{F}_i)$:
	\begin{align}
		\H_F = \bigoplus_{i \leq j\leq K} \Big[\mathbf{N}_i \otimes \mathbf{N}_j^o \oplus \mathbf{N}_j \otimes \mathbf{N}_i^o\Big]^{\oplus M_{N_iN_j}} 
\label{eq:Hilbertspace}.
	\end{align}
The non-negative integers $M_{N_iN_j}$ denote the \emph{multiplicity} of the representation $\mathbf{N}_i \otimes \mathbf{N}_j^o$.
	\item[-] The $\A$-bimodule structure is implemented by a real structure $J$ interchanging representations $\mathbf{N}_i \otimes \mathbf{N}_j^o$ and $\mathbf{N}_j \otimes \mathbf{N}_i^o$ (so $M_{N_iN_j} = M_{N_jN_i}$).
	\item[-] For each component of the algebra for which $\mathbb{F}_i = \mathbb{C}$ we will a priori allow both the complex linear representation $\mathbf{N}_i$ as the real-linear representation $\overline{\mathbf{N}}_i$ given by
	\begin{align*}
		\pi' &: M_{N_i}(\mathbb{F}_i) \to \End_{\mathbb{R}}(\mathbb{F}^{N_i}), & 
		\pi'(m)v &:= \overline{m}v,\qquad v \in \mathbb{F}^N.
	\end{align*}
\end{itemize}

The algebra \eqref{eq:algebra} is a \emph{real} algebra. We are using this general form since already in describing the Standard Model (let alone extensions of it) one uses a copy of the quaterions $\mathbb{H}$, a real algebra.

\subsection{Almost-commutative geometries} 

In describing particle models using NCG one constructs spectral triples by taking the tensor product of a canonical spectral triple (Example \ref{ex:canon}) and a finite spectral triple (Example \ref{ex:finite}). Spectral triples of this form are generally referred to as \emph{almost-commutative geometries}. This tensor product simply consists of taking the tensor product of the various components of the two spectral triples, except for the Dirac operator, which is given by $\slashed{\partial} \otimes 1 + \gamma_M \otimes D_F$, where $D_F$ is the Dirac operator of the finite spectral triple. In this article we will not deal with operators $D_F$, in fact at some points we will even demand it to be absent by putting $D_F = 0$. If confusion is likely to arise, we will add a subscript $F$ to distinguish (a component of) a finite spectral triple from its canonical counterpart.\\

In the canonical spectral triple the algebra encodes space(time), in a finite spectral triple it is intimately connected to the gauge group. Noncommutative geometry can thus be said to put the external and internal degrees of freedom of a particle on similar footing.\\

When part of an almost-commutative geometry, the Dirac operator of the canonical spectral triple develops extra terms that have the characteristics of gauge fields \cite[\S \MakeUppercase{\romannumeral 11}]{C00}, whereas the finite Dirac operator $D_F$ can develop extra terms that parametrize scalar fields, such as the Higgs field. We will write $D_A := \slashed{\partial} \otimes 1 + \mathbb{A} + \gamma_M \otimes D_F(\phi)$, where the symbol $\mathbb{A}$ represents the generic gauge fields and the scalar fields are generically denoted by $\phi$.\\ 

With an almost-commutative geometry we can associate \cite{CC96} an \emph{action} functional (consisting of two terms: the {\it fermionic action} and the {\it spectral action}) via 
\begin{align}
	S[\bar\zeta, \zeta, A] := \frac{1}{2}\langle J\zeta, D_A \zeta \rangle + \tr f(D_A^2/\Lambda^2) \label{eq:totalaction}
\end{align}	
where $f$ is a positive function and
\begin{align}
	\langle J\zeta, D_A \zeta \rangle = \int_M (J\zeta, D_A\zeta)\sqrt{g}\mathrm{d}^4x,\qquad \zeta \in \frac{1}{2}(1 + \gamma_M \otimes \gamma_F)\H \equiv \H^+,\label{eq:fermionaction}
\end{align}
with $(.,.) :\H \times \H \to C^{\infty}(M)$ the Hermitian structure on $L^2(M, S)$ combined with the inner product in $\H_F$.\footnote{Note that, because of the form of an almost-commutative spectral triple, the requirement that $\gamma\zeta = \zeta$ does not imply that $\gamma_M \zeta = \zeta$.} \\

In practice one has to resort to approximations for calculating the second part of \eqref{eq:totalaction} explicitly. Most often this is done \cite{CM97} via a \emph{heat kernel expansion} \cite{GIL84}. For convenience, we end this section by recalling some details on this that will be of value later on. If $V$ is a vector bundle on a compact Riemannian manifold $(M, g)$ and if $P:C^{\infty}(V)\to C^{\infty}(V)$ is a second-order elliptic differential operator of the form
\begin{equation}
 P = - \big(g^{\mu\nu}\partial_{\mu}\partial_{\nu} + K^{\mu}\partial_{\mu} + L)\label{eq:elliptic} 
\end{equation}
with $K^{\mu}, L \in \Gamma(\End(V))$, then there exist a unique connection $\nabla$ and an endomorphism $E$ on $V$ such that
\begin{align*}
  P = \nabla\nabla^* - E.
\end{align*}
Explicitly, we write locally $\nabla_{\mu} = \partial_{\mu} + \omega'_{\mu}$, where $\omega_\mu' = \frac{1}{2}(g_{\mu\nu}K_\nu + g_{\mu\nu}g^{\rho\sigma}\Gamma_{\rho\sigma}^{\nu})$. Using this $\omega'_\mu$ and $L$ we find $E \in \Gamma(\End(V))$ and compute for the curvature $\Omega_{\mu\nu}$ of $\nabla$:
\begin{subequations}
\begin{align*}
  E &:= L - g^{\mu\nu}\partial_{\nu}(\omega_{\mu}') - g^{\mu\nu}\omega_\mu'\omega_\nu' + g^{\mu\nu}\omega_\rho'\Gamma_{\mu\nu}^\rho; \\
 \Omega_{\mu\nu} &:= \partial_{\mu}(\omega'_{\nu}) - \partial_{\nu}(\omega'_{\mu}) + [\omega'_{\mu},\omega'_{\nu}].
\end{align*}
\end{subequations}
It is this context in which one is able to compute the spectral action, by making an asymptotic expansion (as $t \to 0$) of the trace of the operator $e^{-tP}$ in powers of $t$:
\begin{align*}
\tr\,e^{-tP} &\sim \sum_{n \geq 0}t^{(n-m)/2}\int_M\tr e_n(x, P)\sqrt{g}\mathrm{d}^mx,
\end{align*}
where $m$ is the dimension of $M$ and the coefficients $\tr e_n(x, P)$ are called the \emph{Seeley--DeWitt coefficients}. One finds \cite[Ch~4.8]{GIL84} that for $n$ odd, $e_n(x, P) = 0$ and the first three even coefficients are given by
\begin{subequations}
	\begin{align}
		e_0(x, P) &= (4\pi)^{-m/2}(\id)\label{eq:gilkey1};\\
		e_2(x, P) &= (4\pi)^{-m/2}(-R/6\,\id + E)\label{eq:gilkey2};\\
		e_4(x, P) &= (4\pi)^{-m/2}\frac{1}{360}\big(-12R_{;\mu}^{\phantom{;\mu}\mu} + 5R^2 - 2R_{\mu\nu}R^{\mu\nu} \label{eq:gilkey3} \\
 &\qquad+ 2R_{\mu\nu\rho\sigma}R^{\mu\nu\rho\sigma} - 60RE + 180E^2 +60 E_{;\mu}^{\phantom{;\mu}\mu} + 30\Omega_{\mu\nu}\Omega^{\mu\nu}\big) \nonumber ,
	\end{align}
\end{subequations}
where $R$ is the scalar curvature of $M$, $R_{;\mu}^{\phantom{;\mu}\mu} := \nabla^{\mu}\nabla_{\mu}R$ and the same for $E$. \emph{In all cases that we will consider, the manifold $M$ will be taken four-dimensional and without boundary} so that the terms $E_{;\mu}^{\phantom{;\mu}\mu}, R_{;\mu}^{\phantom{;\mu}\mu}$ vanish by Stokes' Theorem.\\

Applying the heat kernel expansion of the second term in the action \eqref{eq:totalaction} then gives
\begin{align}
	\tr f(D_A^2/\Lambda^2) &\sim  2\Lambda^4 f_4 \tr e_0(D) + 2f_2\Lambda^2 \tr e_2(D) + f(0) \tr e_4(D) + \mathcal{O}(\Lambda^{-2})\label{eq:total_action_exp},
\end{align}
where $f_i$ is the $(i-1)$th moment of $f$:
\begin{align*}
	f_i = \int_0^{\infty} f(x) x^{i - 1}\mathrm{d}x.
\end{align*}

We underline that the action \eqref{eq:total_action_exp} is automatically gauge invariant and is fixed by the contents of the spectral triple.

\section{Constraints on the algebra $\A$}\label{sec:alg}
Let $\A$ be a $*$-algebra that is represented on a Hilbert space $\H$. Corresponding to the pair $(\A, \H)$ we define the \emph{gauge group}
	\begin{align*}
		SU(\A) := \{u \in U(\A), \det{}_{\H}(u) = 1\} 
	\end{align*}
where $U(\A)$ is the group of elements $u \in \A$ such that $uu^* = u^*u = 1$ and with $\det{}_{\H}$ we mean the determinant of the representation $\pi : \A \to B(\H)$. We then have
\begin{lem}
	Suppose that $\A$ is such that $su(\A) \simeq su(3) \oplus su(2) \oplus u(1)$. Then $\A$ must be in the following list:
	\begin{enumerate}
		\item $M_3(\com) \oplus \Qu \oplus M_2(\R)$,
		\item $M_3(\com) \oplus M_3(\mathbb{R}) \oplus M_2(\mathbb{R})$,
		\item $M_3(\com) \oplus \Qu \oplus \C$,
		\item $M_3(\com) \oplus M_2(\C)$ or
		\item $M_3(\com) \oplus M_3(\mathbb{R}) \oplus \com$,
	\end{enumerate}
	modulo extra summands $\mathbb{R}$.
\end{lem}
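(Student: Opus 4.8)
The plan is to classify all real $*$-algebras $\A$ of the form \eqref{eq:algebra} whose Lie algebra $su(\A)$ is isomorphic to $su(3) \oplus su(2) \oplus u(1)$, by matching the simple Lie-algebra summands on the left and right. The first step is to record, for each type of matrix algebra $M_N(\F)$ with $\F = \R, \C, \Qu$, what the Lie algebra of its unitary group looks like: one has $u(M_N(\C)) = u(N)$, $u(M_N(\R)) = so(N)$, and $u(M_N(\Qu)) = sp(N)$, so that passing to the relevant traceless/determinant-one subalgebra one gets respectively $su(N)$ (plus a $u(1)$ for the $\C$ case unless that $u(1)$ is killed by the global determinant condition defining $SU(\A)$), $so(N)$, and $sp(N)$. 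One then lists which of these coincide with the target simple factors: $su(3)$ arises only from $M_3(\C)$ (since $so(3) \simeq su(2)$, $so(n)$ and $sp(n)$ never give $su(3)$ for the relevant small ranks), $su(2)$ arises from $M_2(\C)$, from $\Qu = M_1(\Qu)$ via $sp(1) \simeq su(2)$, and from $M_3(\R)$ via $so(3) \simeq su(2)$, while $M_2(\R)$ contributes $so(2) \simeq u(1)$ to the abelian part. Summands isomorphic to $\R$ or $\com$ sitting alone contribute $so(1) = 0$ or $u(1)$ respectively and are otherwise harmless, which is why the classification is only ``modulo extra summands $\R$''.

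With this dictionary in hand, the second step is the actual bookkeeping: $su(\A)$ must contain exactly one $su(3)$, so $\A$ has exactly one $M_3(\C)$ factor and no other factor may produce an $su(3)$; it must contain exactly one $su(2)$, which must come from exactly one of $\Qu$, $M_2(\C)$, or $M_3(\R)$; and the total abelian part must be precisely $u(1)$. I would organize the count by which factor supplies the $su(2)$. If it is $M_2(\C)$, that same factor already carries a $u(1)$ (its central $U(1)$), and the $M_3(\C)$ contributes another potential $u(1)$; the constraint $\det_{\H}(u) = 1$ must remove exactly one combination of these $u(1)$'s so that a single $u(1)$ survives, giving item (4) $M_3(\C) \oplus M_2(\C)$. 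If the $su(2)$ comes from $\Qu$ (which carries no $u(1)$ of its own), then the surviving $u(1)$ must come either from the $M_3(\C)$ (item 3, $M_3(\C) \oplus \Qu \oplus \com$, with no extra $u(1)$-source allowed) or from an $M_2(\R)$ factor supplying $so(2) = u(1)$ while the $M_3(\C)$'s $u(1)$ is eliminated (item 1, $M_3(\C) \oplus \Qu \oplus M_2(\R)$). Symmetrically, if the $su(2)$ comes from $M_3(\R)$ (via $so(3)$), the same two possibilities yield items (5) and (2). One checks in each case that no further matrix factors of size $\geq 2$ can be added without introducing an extra simple summand or an extra $u(1)$, so the list is exhaustive up to extra copies of $\R$.

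The step I expect to be the main obstacle — or at least the one requiring the most care — is the accounting of the abelian $u(1)$ factors and the precise role of the global unimodularity condition $\det_{\H}(u) = 1$ defining $SU(\A)$ rather than $U(\A)$. Each $M_N(\C)$ summand contributes a central $u(1)$ to $u(\A)$, and the determinant-over-$\H$ constraint imposes one linear relation among all these central $u(1)$'s (its precise form depending on the multiplicities $M_{N_iN_j}$ in \eqref{eq:Hilbertspace} and on whether one uses $\mathbf{N}_i$ or $\overline{\mathbf{N}}_i$), so in general the abelian part of $su(\A)$ has dimension (number of $\C$-type factors) $- 1$, possibly augmented by $so(2)$'s from $M_2(\R)$ factors and by $u(1)$'s from standalone $\com$ summands that are not among the ``extra $\R$'' freedom. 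Pinning down exactly when this bookkeeping leaves a one-dimensional abelian part — and verifying that the determinant relation is actually nontrivial on the relevant factors in each of the five cases — is the delicate point; the rest is the routine Lie-algebra dictionary above. I would also need to double-check the low-rank exceptional isomorphisms ($so(2) \simeq u(1)$, $so(3) \simeq su(2) \simeq sp(1)$, $so(4) \simeq su(2) \oplus su(2)$, $so(5) \simeq sp(2)$, $so(6) \simeq su(4)$) to be sure no sporadic coincidence produces an $su(3)$ or an extra $su(2)$ from an unexpected factor, which is what rules out, e.g., an $M_4(\R)$ or $M_4(\C)$ summand.
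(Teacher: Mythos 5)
Your proposal is correct and follows essentially the same route as the paper: Wedderburn's decomposition, the dictionary $u(M_N(\R))=o(N)$, $u(M_N(\C))=u(N)=su(N)\oplus u(1)$, $u(M_N(\Qu))=sp(N)$ together with the low-rank exceptional isomorphisms, and the observation that the single linear condition $\tr_{\H}X=0$ cuts the abelian part down by one dimension, so that $su(\A)\simeq\bigoplus_i su(N_i)\oplus u(1)^{\oplus(C-1)}$ with $C$ the number of complex factors. The paper organizes the case split by $C\in\{1,2\}$ rather than by which factor supplies the $su(2)$, but the bookkeeping is identical; the only slip to watch is that a standalone $\com$ summand is itself one of the $C$ complex factors, so its $u(1)$ is already included in the $(C-1)$-dimensional abelian part and should not be added a second time.
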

\begin{proof}
Let $\A$ be of the form \eqref{eq:algebra}, represented on a Hilbert space $\H$. We define two Lie algebras
\begin{align}
u(\A) &=  \{ X \in \A : X^* = - X \}\nonumber\\
su(\A) &= \{ X \in u(\A) : \tr_\H X = 0 \}\label{eq:lie-alg}
\end{align}
Note that thus $u(\A)$ is a direct sum of simple Lie algebras $o(N_i), u(N_i), sp(N_i)$ according to $\F_i = \R,\C,\Qu$, respectively. All these matrix Lie algebras have a trace, and we observe that the matrices in $o(N_i)$ and $sp(N_i)$ are already traceless. For the complex case, we can write $X_i \in u(N_i)$ as $X_i= Y_i+z_i$ where $z_i = \tr X_i$, showing that:
$$
u(N_i) = su(N_i) \oplus u(1).
$$
The unimodularity condition $\tr_\H X = 0$ translates as 
$$
\sum_i \alpha_i \tr (X_i) = 0
$$
where $\alpha_i$ are the multiplicities of the fundamental representations of $M_{N_i}(\F_i)$ appearing in $\H$. Using the above property for the traces on simple matrix Lie algebras, we find that unimodularity is equivalent to
$$
\sum_{l=1}^C \alpha_{i_l} z_{i_l} = 0 
$$
where the sum is over the complex factors (i.e. for which $\F_i = \C$) in $A$, labeled by $i_1, \ldots, i_C$.

We conclude that
$$
su(\A) \simeq \bigoplus_{i=1}^K su(N_i) \oplus u(1) ^{\oplus(C-1)}
$$
where $su(N_i)$ denotes $o(N_i), su(N_i)$ or $sp(N_i)$ depending on whether $\F_i = \R,\C$ or $\Qu$, respectively.\\

In order to get $su(\A) \simeq u(1) \oplus su(2) \oplus su(3)$, we either need that $C = 1$ (with the $u(1) \simeq so(2)$ coming from $M_2(\mathbb{R})$) or $C = 2$. In the first case $su(2)$ must come from from either $u(\Qu)$ or from $su(2) \simeq so(3)$, $u(M_3(\mathbb{R})) = o(3)$, i.e.
 \begin{itemize}
				\item $N_1 = 3_\C, N_2 = 1_\Qu, N_3 = 2_\R$ or 
				\item $N_1 = 3_\C, N_2 = 3_\R, N_3 = 2_\R$. 
\end{itemize}
In the second case we have the following options:
\begin{itemize}
				\item $N_1 = 3_\C, N_2 = 1_\Qu, N_3 = 1_\C$,
				\item $N_1 = 3_\C, N_2 = 2_\C$ or
				\item $N_1 = 3_\C, N_2 = 3_\R, N_3 = 1_\C$.
\end{itemize}
Modulo extra summands $\R$ these are the five options for the algebra $\A$.
\end{proof}

If the Hilbert space is to contain at least one copy of all the SM representations, then the algebra should allow for at least singlet, doublet and triplet representations. Only the third of these options satisfies this demand\footnote{Although the possible extra summands $\R$ do provide singlets too, the corresponding particles would lack any gauge interactions and are thus unsuitable.}.

\section{Constraints on the finite Hilbert space}

\subsection{Gauge group}\label{sec:gauge}

The gauge group of the Standard Model is known (see e.g.~\cite[\S 3.1]{BH10}) to be
\begin{align}
	\mathcal{G}_{SM} = U(1) \times SU(2) \times SU(3)/\mathbb{Z}_6,\label{eq:gsm}
\end{align}
where the finite abelian subgroup $\mathbb{Z}_6$ stems from the fact that certain elements of $U(1) \times SU(2) \times SU(3)$ act trivially on the Standard Model fermions. 

In the previous paragraph we have demanded that extensions of the SM Hilbert space still have a similar gauge group as that of the SM. Let us explicate this a bit. If we write $u = (u_1, \ldots, u_M)$ for a generic element of $U(\A)$, then the demand $\det_{\H}(u) = 1$ ---that is part of the definition of $SU(\A)$--- applied to the Hilbert space of \eqref{eq:Hilbertspace} translates to
	\begin{align}
		\det{}_\H(u) &= \prod_{i \leq j}^K\det{}_{N_i}(u_i)^{M_{N_iN_j}N_j}\det{}_{N_j}(u_j)^{M_{N_iN_j}N_i} = 1.\label{eq:det}
	\end{align}
	Applying this to the algebra of the SM ---$\com \oplus \mathbb{H} \oplus M_3(\com)$ and correspondingly $U(\A) = U(1) \times SU(2) \times U(3) \ni u = (\lambda, q, m) $--- we have five possibilities for the representations: $\mathbf{N}_i = \mathbf{1}, \overline{\mathbf{1}}, \mathbf{2}$, $\mathbf{3}$ and $\overline{\mathbf{3}}$ on which the determinant equals $\lambda, \lambda^{-1}, 1, \det m$ and $(\det m)^{-1}$ respectively. The relation \eqref{eq:det} then becomes
	\begin{align*}
		\det{}_\H(u) = \lambda^a \det(m)^b = 1
	\end{align*}
	where 
\begin{subequations}
\label{eq:ab}
\begin{align}
	a  &= 2M_{11} - 2M_{\bar1\bar1} + 2(M_{12}  - M_{\bar1 2}) + 3(M_{13} + M_{1\bar3} - M_{\bar13} - M_{\bar1\bar3})\label{eq:a},\\
	b  &= M_{31} + M_{3\bar1} - M_{\bar31} - M_{\bar3\bar1} + 2(M_{32} - M_{\bar32}) + 6(M_{33} - M_{\bar3\bar3}).\label{eq:b}
\end{align}
\end{subequations}
The multiplicity $M_{1\bar1}$ does not enter in the expression for $a$ above, for it actually appears twice but with opposite sign.

\begin{lem}
If $a$ divides $b$ then we have for the gauge group
$$
SU(\A) \simeq \left( U(1) \times SU(3)\right)/\Z_3 \times SU(2) \times \Z_a 
$$
\end{lem}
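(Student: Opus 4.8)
The plan is to peel off the unconstrained $SU(2)$ factor and then use the hypothesis $a\mid b$ to re-coordinatise the remaining $U(1)\times U(3)$ so that the unimodularity constraint lives on a single circle.

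First I would note that, writing a generic element of $U(\A)=U(1)\times SU(2)\times U(3)$ as $u=(\lambda,q,m)$, the relation $\det_{\H}(u)=\lambda^a(\det m)^b=1$ does not constrain $q$ at all (indeed the determinant on $\mathbf{2}$ equals $1$, so $q\in SU(2)$ is free). Hence
\[
	SU(\A)\;\simeq\;SU(2)\times G,\qquad G:=\bigl\{(\lambda,m)\in U(1)\times U(3)\ :\ \lambda^a(\det m)^b=1\bigr\},
\]
and the whole problem is to identify the group $G$.

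Next, since $a\mid b$ we may write $b=ka$ with $k\in\Z$; this is the only point at which the hypothesis is used. Then $\lambda^a(\det m)^b=\bigl(\lambda(\det m)^k\bigr)^a$, so the defining relation of $G$ becomes $\bigl(\lambda(\det m)^k\bigr)^a=1$, which suggests the substitution $\nu:=\lambda(\det m)^k$. Concretely I would introduce the map
\[
	\Psi:U(1)\times U(3)\to U(1)\times U(3),\qquad \Psi(\lambda,m):=\bigl(\lambda(\det m)^k,\,m\bigr),
\]
and check that it is a group automorphism: it is a homomorphism precisely because $\det:U(3)\to U(1)$ is multiplicative, and it is invertible with $\Psi^{-1}(\nu,m)=(\nu(\det m)^{-k},m)$. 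Being an automorphism, $\Psi$ carries the subgroup $G$ isomorphically onto
\[
	\Psi(G)=\bigl\{(\nu,m)\in U(1)\times U(3)\ :\ \nu^a=1\bigr\}=\mu_a\times U(3)\;\simeq\;\Z_a\times U(3),
\]
where $\mu_a\subset U(1)$ denotes the group of $a$-th roots of unity, so that $\mu_a\simeq\Z_a$.

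Finally I would invoke the standard identification $U(3)\simeq\bigl(U(1)\times SU(3)\bigr)/\Z_3$, with $\Z_3$ generated by $(\zeta,\overline{\zeta}\,\id_3)$, $\zeta=e^{2\pi i/3}$, and isomorphism $(\mu,s)\mapsto\mu s$. Combining the displays gives
\[
	SU(\A)\;\simeq\;SU(2)\times\Z_a\times U(3)\;\simeq\;\bigl(U(1)\times SU(3)\bigr)/\Z_3\times SU(2)\times\Z_a,
\]
which is the claim. The one step that really needs care — and the only one using $a\mid b$ — is verifying that $\Psi$ is an honest group automorphism rather than a mere set bijection, since this is exactly what promotes the decomposition of $G$ from a group extension to a genuine direct product. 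Without the divisibility hypothesis the relation $\lambda^a(\det m)^b=1$ cannot be absorbed into a single circle coordinate, and $G$ is then typically a connected cover of $U(3)$ rather than $U(3)$ times a finite group, so the hypothesis is essential and not merely cosmetic.
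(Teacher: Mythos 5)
Your proof is correct, and it reaches the result by a route that is organised differently from the paper's. The paper first splits off the special unitary part of $U(3)$, writing $SU(\A)\simeq G\times SU(2)\times SU(3)/\Z_3$ with $G=\{(\lambda,\mu)\in U(1)\times U(1):\lambda^a\mu^{3b}=1\}$ (via the surjection $(\lambda,\mu,q,m)\mapsto(\lambda,q,\mu m)$ with kernel $\Z_3$), and only then uses $a\mid b$ to exhibit $G\simeq\Z_a\times U(1)$ through a split short exact sequence $1\to U(1)\to G\to\Z_a\to 1$ with section $\lambda\mapsto(\lambda,1)$. You instead keep $U(3)$ intact, apply the shearing automorphism $\Psi(\lambda,m)=(\lambda(\det m)^k,m)$ of $U(1)\times U(3)$ to straighten the constraint onto the single circle factor, obtaining $\Z_a\times U(3)$ directly, and invoke $U(3)\simeq(U(1)\times SU(3))/\Z_3$ only at the very end as a standard fact. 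The two arguments use the divisibility hypothesis in the same essential way (a lattice change of coordinates on a torus; the paper needs $3b/a\in\Z$, you need $b/a\in\Z$), but yours trades the split-exact-sequence verification and the explicit surjectivity/kernel computation for the check that $\Psi$ is a group automorphism, which is arguably cleaner and makes the role of $a\mid b$ more transparent. Your closing remark about what happens without the divisibility hypothesis ($G$ becoming a connected cover of $U(3)$ in general) is a sensible sanity check, though strictly speaking the relevant covering is disconnected or not depending on $\gcd(a,b)$; this does not affect the proof.
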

\proof
We show this in two steps: 
\begin{align}
\tag{I}
&SU(\A) \simeq  G \times SU(2) \times SU(3) / \Z_{3} \intertext{where $G =\left \{ (\lambda,\mu) \in U(1) \times U(1): \lambda^a \mu^{3b} = 1 \right\}$, containing $\Z_{3}$ as the subgroup $\{e \} \times \Z_{3}$, and}
\tag{II} 
&G \simeq \Z_a \times U(1).
\end{align}
For I, consider the following map
$$
(\lambda,\mu, q, m) \in G \times SU(2) \times SU(3) \mapsto (\lambda, q, \mu m) \in SU(\A).
$$
We claim that this map is surjective and has kernel $\Z_{3}$. 
If $(\lambda,q, m) \in SU(\A)$ then there exists a $\mu \in U(1)$ such that $\mu^3 = \det m \in U(1)$. Since $\lambda^a \mu^{3b} = \lambda^a \det m^b = 1$ the element $(\lambda,\mu, q, m)$ lies in the preimage of $(\lambda,q, m)$. 
The kernel of the above map consists of pairs $(\lambda,\mu,q, m) \in G \times SU(2) \times SU(3)$ such that $\lambda =1$, $q = 1$ and $m = \mu^{-1} 1_3$. Since $m \in SU(3)$ this $\mu$ satisfies $\mu^{3} = 1$. So we have established I.

For II we show that the following sequence is split-exact:
$$
1 \to U(1) \to G \to \Z_a \to 1,
$$ 
where the group homomorphisms are given by $\lambda \in U(1) \mapsto (\lambda^{3b/a}, \lambda^{-1}) \in G$ and $(\lambda,\mu) \in G \to \lambda \mu^{3b/a} \in \Z_a$. Exactness can be easily checked, and the splitting map is given by $\lambda \in \Z_a \to (\lambda, 1) \in G$. In this abelian case, the corresponding action of $\Z_a$ on $U(1)$ is trivial so that the resulting semidirect product $G \simeq U(1) \rtimes \Z_a \simeq U(1) \times \Z_a$.
\endproof

\begin{rmk}
In the case that we only allow for representations that already enter in the $\nu$SM (i.e.~$\mathbf{1}, \overline{\mathbf{1}}, \mathbf{2}$ and $\mathbf{3}$), $a$ and $b$ are given by:
\begin{subequations}
\label{eq:ab2}
\begin{align}
	a  &= 2M_{11} - 2M_{\bar1\bar1} + 2(M_{12} - M_{\bar1 2}) + 3(M_{13} - M_{\bar13}) \label{eq:a2},\\
	b  &= M_{31} + M_{3\bar1} + 2M_{32} + 6M_{33}.\label{eq:b2}
\end{align}
\end{subequations}
The Standard Model itself (see also \S\ref{sec:sm}) is given by $M_{11} = M_{1\bar 1} = M_{12} = M_{13}=M_{\bar 1 3} =M_{23} = 3$ (for three families) and all other multiplicities zero. In this case, $a=b=12$ so that the above Lemma yields $SU(\A) \simeq SU(2) \times (U(1)  \times SU(3))/ {\Z_{3}} \times \Z_{12}$ in concordance with what was found in \cite[Prop. 2.16]{CCM}. The representation of $SU(\A)$ on $\H$ is as $u \mapsto u JuJ^{-1}$, and the kernel of this representation is $\Z_2$ \cite[Prop. 6.3]{SD11}. In turn, we find that $SU(\A)/\Z_{2} \simeq \mathcal{G}_{SM} \times \Z_{12}$.
\end{rmk}

Note that from the definition of $SU(\A)$ we can determine the hypercharges of the particles from the $U(1)$ factor of $SU(\A)$ (cf. \cite[Prop. 2.16]{CCM}):
\begin{align}
	\{(\lambda, 1, \lambda^{-a/3b}1_3), \lambda \in \C, |\lambda| = 1\} \subset SU(\A)\label{eq:u1charge},
\end{align}
where the value of $a/b$ must be determined via additional constraints.\footnote{Note however, that these hypercharges are determined up to an overall power.} The corresponding {\it hypercharge generator} $Y = i (1,0,-a/3b  1_3)$ then acts on $\H_F$ as $(Y \otimes 1 - 1 \otimes Y^o)$. The hypercharges for the example of the Standard Model then come out right, as illustrated in Table \ref{tab:hypercharges-SM}.

\begin{table}
\begin{tabularx}{\textwidth}{ccrl}
\toprule
Representation  & Multiplicity & Hypercharge & SM-particles\\
\midrule
${\bf 1} \otimes {\bf 1}^o$ & $3$ & $1-1=0$& right-handed neutrino\\
${\bf \bar 1} \otimes {\bf 1}^o$ & $3$ & $-1-1=-2$& right-handed electron\\
${\bf 2} \otimes {\bf 1}^o$ & $3$ & $-1$& left-handed leptons\\
${\bf 1} \otimes {\bf 3}^o$ & $3$ & $1+1/3=4/3$& right up-type quarks\\
${\bf \bar 1} \otimes {\bf 3}^o$ & $3$ & $-1+1/3=-2/3$& right down-type quarks\\
${\bf 2} \otimes {\bf 3}^o$ & $3$ & $1/3$ & left quarks\\
\bottomrule
\end{tabularx}
\caption{Hypercharges as derived from the finite spectral triple describing the Standard Model.}
\label{tab:hypercharges-SM}
\end{table}

\subsection{Anomalies and anomaly cancellation}\label{sec:anomalies}
 
In short, a \emph{quantum anomaly} is said to arise when a certain local (gauge) symmetry of a classical theory gets broken upon quantization. Here, we focus attention on the \emph{non-abelian chiral gauge anomaly}: even if the action is invariant under the transformation $\H \ni \zeta \to \exp(\gamma^5 T)\zeta$ (with $T$ anti-Hermitian), the path integral corresponding to this action might not be. The demand of having an anomaly free theory (i.e.~a theory that can be quantised) can be casted in an expression that depends on the fermionic content of the theory. We will use it to put constraints on the particle content described by the spectral triple.\\ 

In what follows, we assume a finite Hilbert space of the form $\H_F = \H_L \oplus \H_R \oplus \H_{L}^o \oplus \H_R^o$ (containing the left- and right-handed particles and antiparticles respectively) and KO-dimension $6$ (i.e.~$\gamma_F$ anti-commutes with $J_F$) implying that a generic element $\zeta \in \H^+$ ---the physical Hilbert space--- is of the form 
\begin{align}
	\zeta = \xi_L \otimes e_L + \xi_R \otimes e_R + \eta_R \otimes \bar e_L + \eta_L \otimes \bar e_R.\label{eq:spinor_decomp}
\end{align}
The motivation for working in this KO-dimension $6$ is that we are after theories beyond the Standard Model and this is the KO-dimension in which the latter is described. Furthermore we take the finite Dirac operator $D_F$ to be zero.\footnote{This corresponds to having massless fermions, which is crucial when considering the chiral gauge anomaly.} $T^a$ will denote a \emph{fixed} generator of the gauge Lie algebra $su(\A_F)$, i.e.~$(T^a)^* = - T^a$.
	
\begin{lem}
	Let $\zeta \in \H^+$ and let $\alpha : M \to \mathbb{R}$ be a real function, then the non-abelian chiral gauge transformation
	\begin{align}
	\zeta &\to  U_J\zeta,\text{ with } U_J := \exp(\alpha  \gamma \mathbb{T}^a), \mathbb{T}^a = \pi(T^a) - J\pi(T^a)^* J^*\label{eq:anomtrans}
	\end{align}
is an on-shell symmetry of the fermionic action in \eqref{eq:totalaction}.
\end{lem}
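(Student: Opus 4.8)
The plan is to show that the fermionic action $\frac{1}{2}\langle J\zeta, D_A\zeta\rangle$ is invariant under \eqref{eq:anomtrans} by combining two facts: the antisymmetry of the bilinear form $(\zeta_1,\zeta_2)\mapsto\langle J\zeta_1,D_A\zeta_2\rangle$ on $\H^+$ (a consequence of the KO-dimension $6$ sign rules, which is why the Grassmann/classical fermionic action is nontrivial at all), and the compatibility of $U_J$ with $J$ and with $D_A$. First I would record the infinitesimal version: it suffices to check that $\frac{d}{d\alpha}\big|_{\alpha=0}\langle J U_J\zeta, D_A U_J\zeta\rangle = 0$ pointwise in $\alpha$, equivalently that $\langle J(\gamma\mathbb{T}^a\zeta), D_A\zeta\rangle + \langle J\zeta, D_A(\gamma\mathbb{T}^a\zeta)\rangle$ vanishes (using that $U_J$ is a one-parameter group, it is enough to differentiate at a general point, but by reparametrisation one reduces to $\alpha=0$).

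Next I would analyse the two terms. For the second term, $D_A = \slashed\partial\otimes 1 + \mathbb{A} + \gamma_M\otimes D_F$ with $D_F=0$; since $\mathbb{T}^a$ is built from $\pi(T^a)$ and its $J$-conjugate, and gauge transformations act on $\mathbb{A}$ in the usual way, the operator $\mathbb{T}^a$ commutes with the gauge part $\pi(\A)+J\pi(\A)J^*$ appearing in $\mathbb{A}$, while with $\slashed\partial\otimes 1$ it produces the $[\slashed\partial,\cdot]$ term. Crucially $\gamma = \gamma_M\otimes\gamma_F$ anticommutes with $\slashed\partial\otimes 1$, so moving the $\gamma$ through $D_A$ flips a sign relative to the gauge part; this is the standard chiral mechanism. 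For the first term I would use $J^2=1$ and anti-unitarity of $J$ to rewrite $\langle J(\gamma\mathbb{T}^a\zeta), D_A\zeta\rangle$ as $\langle J\zeta, (\ldots)D_A\zeta\rangle$-type expressions, invoking $JD=DJ$ and $J\gamma=-\gamma J$ in KO-dimension $6$, together with the self-adjointness of $D_A$ and the fact that $\mathbb{T}^a$ is "$J$-real" by construction ($\mathbb{T}^a = \pi(T^a)-J\pi(T^a)^*J^*$ is designed precisely so that $J\mathbb{T}^a = -\mathbb{T}^a J$ or commutes, whichever the signs dictate). Combining, the two contributions cancel: the antisymmetry of $\langle J\cdot,D_A\cdot\rangle$ on $\H^+$ exchanges the two terms up to a sign, and the $\gamma$-anticommutation supplies exactly the sign needed for cancellation rather than doubling.

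I would then remark on the phrase \emph{on-shell}: if one does not assume the equations of motion, the variation produces a boundary term $\int_M \partial_\mu(\ldots)$ from the $[\slashed\partial,\alpha]$ piece, which vanishes on a manifold without boundary (as stipulated earlier in the excerpt), or alternatively one uses the Dirac equation $D_A\zeta=0$ to kill the remaining term; either reading makes the statement correct, and I would state explicitly which one is meant.

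The main obstacle I anticipate is bookkeeping of signs: one must track the KO-dimension $6$ signs ($J^2=+1$, $JD=+DJ$, $J\gamma=-\gamma J$), the anticommutation $\gamma_M\slashed\partial = -\slashed\partial\gamma_M$, and the antisymmetry of the fermionic bilinear form simultaneously, and a single sign error collapses the cancellation. The cleanest route is to first establish the abstract identity $\langle J\psi, D_A\phi\rangle = -\langle J\phi, D_A\psi\rangle$ for $\psi,\phi\in\H^+$ as a lemma (or cite it from the construction of the fermionic action in \cite{CC96}), then feed $\psi=\gamma\mathbb{T}^a\zeta$, $\phi=\zeta$ into it and separately compute $[\mathbb{T}^a,D_A]$, so that the sign analysis is localised to two short independent computations rather than one long entangled one.
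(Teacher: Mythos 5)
Your opening moves coincide with the paper's: conjugate $D_A$ by $U_J$ using the identities $JU_J=U_J^*J$ and $U_J\gamma^\mu=\gamma^\mu U_J^*$ (which follow from $\{\gamma^5,\gamma^\mu\}=\{J_F,\gamma_F\}=0$ and the commutativity of the left and right actions), and isolate the first-order variation in $\alpha$. But two of your central claims are wrong, and they sit exactly where the content of the lemma lies. First, $\mathbb{T}^a$ does \emph{not} commute with the gauge part of $\mathbb{A}_\mu$: this is a non-abelian transformation, and $[\mathbb{T}^a,\mathbb{T}^b]=f^{ab}_{\phantom{ab}c}\mathbb{T}^c\neq 0$ (only the cross terms between $\pi(T^a)$ and $J\pi(\cdot)J^*$ drop, by the order-zero condition). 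The surviving commutator term $\tfrac12\langle J\zeta,\, i\alpha\gamma^\mu\gamma\,[\mathbb{T}^a,\mathbb{A}_\mu]\zeta\rangle$ is indispensable: it combines with the $\partial_\mu\alpha$ term into the gauge-covariant divergence $D_{\mu\,c}^{a}j^{\mu\,c}$ of the chiral current $j^{\mu\,a}=(J\zeta,i\gamma^\mu\gamma\mathbb{T}^a\zeta)\sqrt{g}$. Second, the residual term is not a boundary term. After integrating by parts the variation equals $-\tfrac12\int_M\alpha\, D_{\mu\,c}^{a}j^{\mu\,c}\,\mathrm{d}^4x$, and since $\alpha$ is an arbitrary function on $M$ this vanishes only if the current is covariantly conserved. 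Establishing that conservation is the actual work of the proof and is where ``on-shell'' enters: one computes $\partial_\mu j^{\mu\,a}$ using Hermiticity of the spin connection, $\partial_\mu\sqrt{g}=\Gamma^{\lambda}_{\lambda\mu}\sqrt{g}$ and $[\nabla^S_\mu,\gamma^\mu]=-\Gamma^\mu_{\mu\lambda}\gamma^\lambda$, and then invokes the Dirac equation to conclude $D_{\mu\,c}^{a}j^{\mu\,c}=0$. Your assertion that ``either reading makes the statement correct'' is therefore false: off-shell, and for non-constant $\alpha$, the transformation is not a symmetry --- which is precisely why the subsequent anomaly analysis is nontrivial.

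A smaller point: the cancellation mechanism you propose via the antisymmetry of $\langle J\cdot,D_A\cdot\rangle$ on $\H^+$ does not deliver what you want here. For commuting spinors that antisymmetry would make the action itself vanish; the fields are Grassmann, and the paper uses the antisymmetry only later, to rewrite the action in terms of the component spinors $\xi,\eta$ for the path-integral measure. The zeroth-order term is recovered from the $U_J^*\slashed{\partial}\,U_J$ conjugation identities, and the first-order term does not cancel algebraically --- it is killed by the current-conservation identity. I recommend restructuring your argument around that identity rather than around a pairwise cancellation of two inner-product terms.
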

\begin{proof}
We only have to consider the fermionic part of \eqref{eq:totalaction}, since the gauge fields do not transform. For $U_J$ as in \eqref{eq:anomtrans} with $ \gamma = \gamma^5  \otimes \gamma_F$ one can easily prove that 
\begin{align*}
	J U_J  &= U_J^* J &&\text{and}& U_J\gamma^\mu &= \gamma^\mu U_J^*,
\end{align*} 
by using that left and right actions of the algebra on the Hilbert space must commute and that $\{\gamma^5, \gamma^\mu\} = \{J, \gamma_F\} = 0$. Then for the inner product we have, upon transforming $\zeta$:
\begin{align*}
	\frac{1}{2}\langle JU_J\zeta, D_AU_J\zeta\rangle &= \frac{1}{2}\langle J\zeta, i\gamma^\mu U_J^*(\nabla^S_\mu + \mathbb{A}_\mu)U_J\zeta\rangle \\
&= \frac{1}{2}\langle J\zeta, \slashed{\partial}\zeta\rangle + \frac{1}{2}\langle J\zeta, i \partial_\mu(\alpha) \gamma^\mu \gamma\,\mathbb{T}^a\zeta\rangle  + \frac{1}{2}\langle J\zeta, i\gamma^\mu \exp[-\alpha \gamma\, \ad\mathbb{T}^a]\mathbb{A}_\mu\zeta\rangle, 
\end{align*}
where we have used the identity $\exp(A)B\exp(- A) = \exp(\ad A)B$ for complex $n \times n$ matrices $A$ and $B$. If we expand the exponential in the last term of the previous expression, and use partial integration for the second term, this becomes
\begin{align*}
 & \frac{1}{2}\langle J\zeta, D_A\zeta\rangle - \frac{1}{2}\int_M \alpha \partial_\mu \big[(J\zeta, i \gamma^\mu \gamma \mathbb{T}^a\zeta)\sqrt{g}\big]\mathrm{d}^4x - \frac{1}{2}\langle J\zeta, i\alpha \gamma^\mu\gamma [\mathbb{T}^a,\mathbb{A}_\mu]\zeta\rangle + \mathcal{O}(\alpha^2).
\end{align*}
Writing $\mathbb{A}_\mu = {A}_{\mu\,b} \mathbb{T}^b$ and using $[\mathbb{T}^a, \mathbb{T}^b] = \ad[T^a, T^b] = f^{ab}_{\phantom{ab}c}\mathbb{T}^c$ this implies
\begin{align*}
 \frac{1}{2} \langle JU_J\zeta, D_AU_J\zeta\rangle &= \frac{1}{2}\langle J\zeta, D_A\zeta\rangle - \frac{1}{2}\int_M \alpha (D_{\mu\,c}^{a} j^{\mu\,c})\mathrm{d}^4x + \mathcal{O}(\alpha^2)
\end{align*}
where 
\begin{align*}
D_{\mu\,c}^{a} &= \partial_\mu \delta^{a}_{\phantom{a}c} + f^{ab}_{\phantom{ab}c}A_{\mu\, b},& j^{\mu\, a} &= (J\zeta, i\gamma^\mu\gamma \mathbb{T}^a\zeta)\sqrt{g}.
\end{align*}
On the other hand we have the following:
\begin{align*}
		\partial_\mu j^{\mu\,a} = (J\nabla^S_\mu\zeta, i\gamma^\mu\gamma \mathbb{T}^a\zeta)\sqrt{g} + (J\zeta, i\nabla^S_\mu\gamma^\mu\gamma \mathbb{T}^a\zeta)\sqrt{g}+ (J\zeta, i\gamma^\mu\gamma \mathbb{T}^a\zeta)\partial_\mu\sqrt{g},
\end{align*}
where we have used that the spin-connection is Hermitian and commutes with $J$. Using that for the tensor density $\sqrt{g}$ we have
$
	\partial_\mu \sqrt{g} = \Gamma^{\lambda}_{\lambda\mu}\sqrt{g}
$ 
(e.g.~\cite[\S 21.2]{misner}) and that for the spin-connection $[\nabla^S_\mu, \gamma^\mu] = -\Gamma^\mu_{\mu\lambda}\gamma^\lambda$ this yields
\begin{align*}
\partial_\mu j^{\mu\,a} = (J\nabla^S_\mu\zeta, i\gamma^\mu\gamma \mathbb{T}^a\zeta)\sqrt{g} + (J\zeta, i\gamma^\mu\gamma \mathbb{T}^a\nabla^S_\mu\zeta)\sqrt{g}.
\end{align*}
Employing the Dirac equation for $\zeta$ and the skew-adjointness of $\mathbb{A}_\mu$ this gives
\begin{align*}
\partial_\mu j^{\mu\,a} = - (J\zeta, i\gamma^\mu\gamma [\mathbb{T}^a,\mathbb{A}_\mu]\zeta)\sqrt{g}.
\end{align*}
i.e.
$
	D_{\mu\,c}^{a} j^{\mu\, c} = 0,
$ 
establishing the result.
\end{proof} 

In order to progress, we need to be a bit more detailed. For a representation $\mathbf{N_i} \otimes \mathbf{N_j}^o$ of a given pair $(i, j)$, $i \leq j$, the representation $\pi(T^a)$ can be decomposed as $\pi_L(T^a) + \pi_R(T^a)$, where one of the two is trivial depending on the chirality of $\mathbf{N_i} \otimes \mathbf{N_j}^o$. The representation on the conjugate of $\mathbf{N_i} \otimes \mathbf{N_j}^o$ is denoted by $\bar\pi_L(T^a) + \bar\pi_R(T^a)$. Hence we write for the full representation $\pi(T^a)$ on $\mathbf{N_i} \otimes \mathbf{N_j}^o \oplus \mathbf{N_i} \otimes \mathbf{N_j}^o$:
\begin{align}
\pi(T^a) = \pi_L(T^a) + \pi_R(T^a) +\bar\pi_L(T^a) + \bar\pi_R(T^a).\label{eq:rep_decomp}
\end{align}
\begin{exmpl}
	For the case $\H_F = \mathbf{2}_L \otimes \mathbf{3}^o \oplus \mathbf{3} \otimes \mathbf{2}^o_L$ and $T^a = \tau^a$, the Gell-Mann matrices, we have
	\begin{align*}
		\pi_R(\tau^a) &= \bar\pi_R(\tau^a) = \pi_L(\tau^a) = 0,& \bar\pi_L(\tau^a) &= \tau^a.
	\end{align*}
	For the case $\H_F = \mathbf{1}_R \otimes \mathbf{3}^o \oplus \mathbf{3} \otimes \mathbf{1}^o_R$ and $T^a \equiv Y = i(1, 0, -\tfrac{a}{3b}1_3)$, the hypercharge generator, we have
	\begin{align*}
		\pi_R(Y) &= i, & \bar\pi_R(Y) &= -i\frac{a}{3b}1_3, & \pi_L(\lambda) &= \bar\pi_L(\lambda) = 0.
	\end{align*}
\end{exmpl}
Since $\langle J \xi,D_A \xi \rangle$ defines an anti-symmetric bilinear form \cite[Prop. 4.1]{CCM} we can also write the chiral gauge transformation in terms of the component spinors \eqref{eq:spinor_decomp}, reading:
\begin{align*}
	\frac{1}{2}\langle JU_J\zeta, D_A U_J\zeta\rangle = \langle J_M(U^o)^*\eta, D_AU\xi\rangle 
\end{align*}
i.e.~under the transformation \eqref{eq:anomtrans} we have 
\begin{align}
	\xi &\mapsto \exp(\alpha \gamma\, \diag\{\pi_L - \bar\pi^o_L, \pi_R - \bar \pi_R^o\}(T^a)) \xi \equiv U\xi, & \eta &\mapsto (U^*)^o\eta \label{eq:anomtrans_spinors}
\end{align}
in the notation of \eqref{eq:rep_decomp}.\\ 

Now considering the path integral
\begin{align}
	\int \mathcal{D}\eta\mathcal{D}\xi \mathcal{D}\mathbb{A}_\mu \exp(S[\eta,\xi, \mathbb{A}_\mu]),\label{eq:path_integral}
\end{align}
(where in a Euclidean set up the fields $\xi$ and $\eta$ should be considered as independent) there is a second effect from transforming the fermionic fields \eqref{eq:anomtrans_spinors}, which is from the transformation of its measure. The following derivation is primarily based on \cite[\S 22.2]{W05}, \cite[\S 5.6]{BERT96} and \cite[\S 13.2]{NAK90}. We first consider the effect of that transformation on the Dirac spinor $\xi = (\xi_L, \xi_R)$ with finite component. Let 
\[
\xi = \sum_I a_I\psi_I
\]
be its decomposition into the eigenfunctions of $D$. Here $I$ is a generic index describing both continuous and finite indices: if, for example, the particle sector of $\H_F$ would equal $(\mathbf{N} \otimes \mathbf{M}^o)_L + (\mathbf{K} \otimes \mathbf{L}^o)_R$ then
\begin{align*}
 	\sum_I a_I\psi_I \equiv \sum_{i}\Big(\sum_{n,m} a_{inm} \psi_{iL} \otimes e_n \otimes \bar e_m + \sum_{k,l} a_{ikl} \psi_{iR} \otimes e_k \otimes \bar e_l\Big),\quad a_{inm} = \langle \psi_{iL} \otimes e_n \otimes \bar e_m, \xi\rangle,
\end{align*}
where $\psi_{iL,R}$ is a left-/right-handed eigenspinor of $\slashed{\partial}$ and $\{e_n \otimes \bar e_m, 1 \leq n \leq N, 1\leq m \leq M\}$ denotes the basis of the left-handed finite part. Then the transformation
\begin{align*}
	\xi \mapsto \xi' :=  U\xi 
\end{align*}
sets
$
	\xi' = \sum_{I} a_{I}'\psi_I $ with $a_{I}' = \sum_{J} C_{IJ}a_{J}
$
where 
\[ C_{IJ} = \delta_{IJ} + \langle \psi_I, \alpha \gamma \diag\{ (\pi_L  - \bar\pi_L^o)(T^a), (\pi_R - \bar\pi_R^o)(T^a)\}\psi_J\rangle + \mathcal{O}(\alpha^2). \]
The effect of the transformation on the measure $\mathcal{D}\xi$ in \eqref{eq:path_integral} is then
\begin{align*}
\mathcal{D}\xi \to \det(C)^{-1}\mathcal{D}\xi,
\end{align*}
and similarly for $\eta$. Writing $X := \diag\{ (\pi_L  - \bar\pi_L^o)(T^a), (\pi_R - \bar\pi_R^o)(T^a)\}$, and calculating the determinant by using $\det A = \exp\tr \ln A$ gives
\begin{align*}
	\det(C_{IJ}) &= \exp\tr \ln (\delta_{IJ} + \langle \psi_I , \alpha \gamma X\psi_J \rangle + \mathcal{O}(\alpha^2)) \\
				&\approx \exp\tr \langle \psi_I , \alpha \gamma X\psi_J\rangle\\
				&= \exp \sum_{I}\langle \psi_I , \alpha \gamma X\psi_I \rangle,
\end{align*}
where in going from the first to the second line we have used that $\ln(1 + z) = z + \mathcal{O}(z^2)$, and that $\alpha$ is infinitesimal. The anomaly corresponding to the transformation of $\xi$ is thus
\begin{align*}
	\exp(- \mathscr{A}_\xi),\qquad \mathscr{A}_\xi = \int_M  \alpha\,\mathscr{A}_\xi(x)\,\sqrt{g}\mathrm{d}^4x,\qquad \mathscr{A}_\xi(x) = \sum_{I}(\psi_I, \gamma\,X\psi_I).
\end{align*}
Even though this is an ill-defined quantity, we can make sense of it using the following regularization scheme: 
\begin{align*}
\sum_{I}\langle \psi_I, \alpha\gamma\,X\psi_I\rangle &:= \lim_{\Lambda \to \infty} \sum_{I}\langle \psi_I, \alpha\gamma\,X h(\lambda_{I}^2/\Lambda^2)\psi_I\rangle
\end{align*}
where $h$ can be any function that satisfies $h(0) = 1$ and $\lim_{x \to \infty} h(x) = 0$, and $\lambda_{I}$ is the eigenvalue of $D_A$ with eigenvector $\psi_I$:
\begin{align*}
						\mathscr{A}_\xi = \lim_{\Lambda \to \infty} \sum_{I}\langle \psi_I, \alpha\gamma\,X h(D_A^2/\Lambda^2)\psi_I\rangle
						= \lim_{\Lambda \to \infty} \tr_{\H_f}[\gamma\,X h(D_A^2/\Lambda^2)], 
\end{align*}
where with $\H_f = L^2(M, S)_L \otimes \H_L \oplus L^2(M, S)_R \otimes \H_R$ we mean the \emph{particle sector} of the total Hilbert space $\H$, as opposed to the \emph{anti-particle sector} $\H_{\bar f} = L^2(M, S)_R \otimes \H_L^o \oplus L^2(M, S)_L \otimes \H_R^o$. Using a heat kernel expansion \eqref{eq:total_action_exp} this expression equals
\begin{align}
\mathscr{A}_\xi &= \lim_{\Lambda \to \infty} \tr_{\H_f} \alpha\gamma\,X\Big[h_4\Lambda^2e_0(x, D_A^2) + h_2\Lambda^2 e_2(x, D_A^2) + h(0)e_4(x, D_A^2) + \mathcal{O}(\Lambda^{-2}) \Big]\label{eq:anom-xi}.
\end{align}
In an analogous fashion we can determine the anomaly that is caused by the spinor $\eta$, yielding
\begin{align}
  \mathscr{A}_{\eta} &= - \lim_{\Lambda \to \infty}  \tr_{\H_{\bar f}} \alpha\gamma\,X^o\Big[h_4\Lambda^2e_0(x, D_A^2) + h_2\Lambda^2 e_2(x, D_A^2) + h(0)e_4(x, D_A^2) + \mathcal{O}(\Lambda^{-2}) \Big]\label{eq:anom-eta}.
\end{align}
Then we have the following result:
\begin{lem}
	The total anomaly in the path integral due to \eqref{eq:anomtrans} is given by:
	\begin{align*}
	 \mathscr{A} = \int_M \alpha \mathscr{A}(x) \sqrt{g}\mathrm{d}^4x
	\end{align*}
	with 
	\begin{align}
\mathscr{A}(x) &=	\frac{1}{16\pi^2}\epsilon^{\mu\nu\lambda\sigma}F_{\mu\nu}^bF_{\lambda\sigma}^c\Big[ \tr_{\H_L}  (\pi_L - \bar\pi_L^o)(T^a) \{\mathbb{T}^b, \mathbb{T}^c\} -\tr_{\H_R}  (\pi_R - \bar\pi_R^o)(T^a)\{\mathbb{T}^b, \mathbb{T}^c\}\Big]\nonumber\\
	&\qquad + \frac{1}{192\pi^2}\epsilon^{abcd}R^{\mu\nu}_{\phantom{\mu\nu}ab}R_{\mu\nu\,cd}\Big[ \tr_{\H_L}  (\pi_L - \bar\pi_L^o)(T^a)  -\tr_{\H_R}  (\pi_R - \bar\pi_R^o)(T^a)\Big]\label{eq:anomresult}.
	\end{align}
\end{lem}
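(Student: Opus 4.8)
The plan is to evaluate the two Jacobian contributions \eqref{eq:anom-xi} and \eqref{eq:anom-eta} explicitly by means of the Seeley--DeWitt coefficients \eqref{eq:gilkey1}--\eqref{eq:gilkey3}, showing that the potentially divergent pieces drop out and that, of the finite $h(0)e_4$ part, only the topological terms survive the Clifford traces, which leaves exactly \eqref{eq:anomresult}. This mirrors the standard Fujikawa derivation of the chiral anomaly (cf.\ the references already cited). The first step is to bring $D_A^2$ into the Laplace-type form \eqref{eq:elliptic}: the Lichnerowicz--Weitzenb\"ock identity for the twisted Dirac operator $D_A = \slashed{\partial}\otimes 1 + \mathbb{A}$ (recall $D_F = 0$) yields the connection $\nabla$ that is the tensor product of the spin connection with the gauge connection $\mathbb{A}$, together with the endomorphism $E = -\tfrac14 R\,\id - \tfrac12\gamma^\mu\gamma^\nu F_{\mu\nu}$, where $F_{\mu\nu} = F^b_{\mu\nu}\mathbb{T}^b$ is the field strength, and curvature $\Omega_{\mu\nu} = \tfrac14 R_{\mu\nu ab}\gamma^a\gamma^b + F_{\mu\nu}$. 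Since $M$ is closed and four-dimensional the total-derivative terms $E_{;\mu}{}^{\mu}, R_{;\mu}{}^{\mu}$ in \eqref{eq:gilkey3} vanish by Stokes' theorem.

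Next I would show that the $\Lambda\to\infty$ limit exists because the divergent $e_0$ and $e_2$ contributions are non-genuine. The $\Lambda^4 e_0$ term and the $-\tfrac16 R\,\id$ part of $e_2$ give traces proportional to $\tr_{\H_L}(\pi_L - \bar\pi_L^o)(T^a)$ and $\tr_{\H_R}(\pi_R - \bar\pi_R^o)(T^a)$; for $T^a$ in a non-abelian summand of $su(\A)$ these vanish by tracelessness, and for $T^a$ in the $u(1)$ (hypercharge) direction the leftover $\Lambda^2 R$-pieces are reabsorbed as renormalisations of the gravitational and gauge couplings. The $\gamma^\mu\gamma^\nu F_{\mu\nu}$ part of $E$ in $e_2$ vanishes because $\tr_{S_\pm}(\gamma^\mu\gamma^\nu)$ is symmetric while $F_{\mu\nu}$ is antisymmetric. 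Hence only the $h(0)e_4$ term remains. Inside $e_4$, the pure-curvature terms $R^2, R_{\mu\nu}R^{\mu\nu}, R_{\mu\nu\rho\sigma}R^{\mu\nu\rho\sigma}$ and $RE$, together with the spin$\times$gauge cross term in $\Omega_{\mu\nu}\Omega^{\mu\nu}$, are killed by the same arguments (an insertion $\propto\tr(\pi-\bar\pi^o)(T^a)$, or a single $\gamma^\mu\gamma^\nu$ contracted with an antisymmetric two-form); the purely internal gauge-curvature square in $\Omega_{\mu\nu}\Omega^{\mu\nu}$ contributes only a local, non-topological term.

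What survives are two pieces: $\tfrac14(\gamma^\mu\gamma^\nu F_{\mu\nu})^2$ from the $180\,E^2$ term and $\tfrac{1}{16}R_{\mu\nu ab}R^{\mu\nu}{}_{cd}\,\gamma^a\gamma^b\gamma^c\gamma^d$ from the $30\,\Omega_{\mu\nu}\Omega^{\mu\nu}$ term. Tracing the Clifford factor over a left-handed, resp.\ right-handed, spinor space yields a metric part together with a totally antisymmetric part $\pm 2i\,\epsilon^{\mu\nu\lambda\sigma}$, carrying opposite sign on the $\H_L$- and $\H_R$-blocks. The $\epsilon$-part of the $E^2$ piece then produces $\epsilon^{\mu\nu\lambda\sigma}F^b_{\mu\nu}F^c_{\lambda\sigma}\,[\tr_{\H_L}(\pi_L-\bar\pi_L^o)(T^a)\{\mathbb{T}^b,\mathbb{T}^c\} - \tr_{\H_R}(\pi_R-\bar\pi_R^o)(T^a)\{\mathbb{T}^b,\mathbb{T}^c\}]$, the anticommutator arising from symmetrising the two $F$'s, while the $\epsilon$-part of the $\Omega\Omega$ piece produces the analogous $\epsilon^{abcd}R^{\mu\nu}{}_{ab}R_{\mu\nu cd}\,[\tr_{\H_L}(\pi_L-\bar\pi_L^o)(T^a) - \tr_{\H_R}(\pi_R-\bar\pi_R^o)(T^a)]$; the metric, non-topological remainders — including the internal gauge-curvature square mentioned above — are local and removable by a counterterm, the anomaly being defined modulo such terms. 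Repeating the computation for $\mathscr{A}_\eta$ over $\H_{\bar f}$ with $X^o$ in place of $X$, and using the intertwining relations $JD=DJ$, $J\gamma = -\gamma J$ and $J\mathbb{T}^a J^* = \mathbb{T}^a$ to relate the antiparticle trace to the particle one up to the chirality flip carried by $J$, the two sectors combine into the single expression claimed. The precise coefficients $\tfrac{1}{16\pi^2}$ and $\tfrac{1}{192\pi^2}$ in \eqref{eq:anomresult} then follow from the $(4\pi)^{-2}$, the Gilkey weights $180$ and $30$ inside the factor $\tfrac{1}{360}$, the condition $h(0)=1$, and the normalisation of the Clifford traces.

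The hardest part is this first step together with the chirality bookkeeping. One must be careful about \emph{where} the Riemann curvature enters ($\tfrac14 R$ sitting inside $E$ versus the spin-connection curvature $\tfrac14 R_{\mu\nu ab}\gamma^a\gamma^b$ sitting inside $\Omega$), since everything downstream is read off from this identification; and one must keep track of the opposite self-duality of the Weyl-spinor Clifford traces on the two chirality blocks, which is precisely what supplies the relative minus sign between $\tr_{\H_L}$ and $\tr_{\H_R}$ in \eqref{eq:anomresult}, as well as of the matching between the $\xi$- and $\eta$-sector Jacobians via $J$. The remainder is a lengthy but routine collection of constants.
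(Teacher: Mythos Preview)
Your overall strategy (Lichnerowicz form of $D_A^2$, identification of $E$ and $\Omega_{\mu\nu}$, heat-kernel evaluation of \eqref{eq:anom-xi}--\eqref{eq:anom-eta}) is the same as the paper's. The difference, and the gap, lies in how you perform the spinor trace.

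The paper keeps the insertion $\gamma = \gamma^5 \otimes \gamma_F$ explicit and traces over the full four-component spinor space. The Clifford identities $\tr_S \gamma^5 = \tr_S \gamma^5 \gamma^\mu \gamma^\nu = 0$ then kill the $e_0$ and $e_2$ contributions outright, as well as every piece of $e_4$ that carries fewer than four $\gamma$-matrices (the pure-curvature scalars, the $RE$ term, the $\mathbb F_{\mu\nu}\mathbb F^{\mu\nu}$ part of $\Omega^2$, and the cross terms). What remains is exactly $180 E^2 + 30\,\Omega_{\mu\nu}\Omega^{\mu\nu}$, and in those only the four-$\gamma$ pieces survive; since $\tr_S \gamma^5\gamma^\mu\gamma^\nu\gamma^\lambda\gamma^\sigma \propto \epsilon^{\mu\nu\lambda\sigma}$, the result is purely topological from the start. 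The factor $\gamma_F$, not the chirality of the spinor trace, then produces the relative sign between $\tr_{\H_L}$ and $\tr_{\H_R}$ via $\gamma_F|_{\H_L} = -\gamma_F|_{\H_R} = 1$.

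You instead trace over the two-dimensional Weyl spaces $S_\pm$ separately. There $\tr_{S_\pm}\id \neq 0$ and $\tr_{S_\pm}\gamma^\mu\gamma^\nu\gamma^\lambda\gamma^\sigma$ contains a symmetric metric part in addition to $\pm\epsilon^{\mu\nu\lambda\sigma}$, so you are left with genuine $\Lambda^4$, $\Lambda^2 R$, and non-topological $e_4$ pieces. Your proposal to dispose of these by ``reabsorbing as renormalisations'' or by declaring them ``removable by a counterterm, the anomaly being defined modulo such terms'' is not a proof of the lemma as stated: \eqref{eq:anomresult} is an exact equality, not an identification modulo local counterterms. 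The paper needs no such appeal because the $\gamma^5$-trace annihilates these terms identically; equivalently, in your chiral bookkeeping they cancel exactly between the $\xi$- and $\eta$-Jacobians once you use $\tr_{\H^o}X^o = \tr_{\H}X$ together with $\{\gamma_F,J_F\}=0$. Replacing the counterterm argument by either of these exact cancellations closes the gap, and the rest of your outline (the emergence of the anticommutator from symmetrising $F^b F^c$, and the assembly of the numerical prefactors) goes through as in the paper.
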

\begin{proof}
We start with \eqref{eq:anom-xi}, taking the expression for $e_{0,2,4}(x, D_A^2)$ from \eqref{eq:gilkey3}, 
where, for an almost-commutative geometry, $E$ and $\Omega_{\mu\nu}$ are determined by the field strength of the gauge field and the Riemann tensor for $M$:
\begin{align*}
	E &= \frac{1}{4} R  + \frac{1}{2}\gamma^\mu\gamma^\nu \mathbb{F}_{\mu\nu},& \Omega_{\mu\nu} = \frac{1}{4}R^{ab}_{\mu\nu}\gamma_{ab}\otimes 1 + \mathbb{F}_{\mu\nu}, \end{align*}
with $\gamma_{ab} = \tfrac{1}{2}[\gamma_a, \gamma_b]$. Since $\tr \gamma^5 = \tr\gamma^5\gamma^\mu\gamma^\nu = 0$ we only retain, after taking the limit in $\Lambda$, 
\begin{align}
\mathscr{A}_\xi&= h(0)\tr_{\H_f} \alpha \gamma\,X e_4(x, D_A^2) \nonumber\\ 
&= \frac{h(0)}{(4\pi)^2}\frac{1}{360}\tr_{\H_f} \alpha \gamma\,X \big[180E^2 + 30\Omega_{\mu\nu}\Omega^{\mu\nu}\big], 
\end{align}
plus boundary terms, that vanish upon integration since the manifold $M$ is without boundary. Inserting the expressions for $E$ and $\Omega^{\mu\nu}$, performing the trace over Dirac indices and setting $h(0) = 1$ this becomes
\begin{align*}
	\mathscr{A}_\xi &= \int_M \alpha \tr_{\H_L \oplus \H_R}\bigg[\frac{1}{32\pi^2}\epsilon^{\mu\nu\lambda\sigma}\gamma_F X \mathbb{F}_{\mu\nu}\mathbb{F}_{\lambda\sigma} + \frac{1}{384\pi^2}\epsilon^{abcd}R^{\mu\nu}_{\phantom{\mu\nu}ab}R_{\mu\nu\,cd}\gamma_F X\bigg]\sqrt{g}\mathrm{d^4}x.
\end{align*}
The derivation of $\mathscr{A}_\eta$ can be found using the same arguments, reading 
\begin{align*}
	\mathscr{A}_\eta &= - \int_M \alpha\tr_{\H_L^o \oplus \H_R^o}\bigg[ \frac{1}{32\pi^2}\epsilon^{\mu\nu\lambda\sigma}  \gamma_F X^o \mathbb{F}_{\mu\nu}\mathbb{F}_{\lambda\sigma} + \frac{1}{384\pi^2}\epsilon^{abcd}R^{\mu\nu}_{\phantom{\mu\nu}ab}R_{\mu\nu\,cd} \gamma_F X^o\bigg]\sqrt{g}\mathrm{d}^4x.
\end{align*}
Adding $\mathscr{A}_\xi$ and $\mathscr{A}_\eta$ and putting in the expression for $X$, using that $\{\gamma_F, J_F\} = 0$ and $\tr_{\H^o}X^o = \tr_{\H}X$ (so contributions from the particle and antiparticle sectors add), the total anomaly reads
\begin{align*}
	\mathscr{A} &= \int_M \alpha\mathscr{A}(x)\sqrt{g} \mathrm{d}^4x\\
	&\text{with}\quad \mathscr{A}(x) =\frac{1}{16\pi^2}\epsilon^{\mu\nu\lambda\sigma}\Big[ \tr_{\H_L}  (\pi_L - \bar\pi_L^o)(T^a) \mathbb{F}_{\mu\nu}\mathbb{F}_{\lambda\sigma} -\tr_{\H_R}  (\pi_R - \bar\pi_R^o)(T^a) \mathbb{F}_{\mu\nu}\mathbb{F}_{\lambda\sigma}\Big]\\
	&\qquad\qquad\qquad + \frac{1}{192\pi^2}\epsilon^{abcd}R^{\mu\nu}_{\phantom{\mu\nu}ab}R_{\mu\nu\,cd}\Big[ \tr_{\H_L}  (\pi_L - \bar\pi_L^o)(T^a)  -\tr_{\H_R}  (\pi_R - \bar\pi_R^o)(T^a)\Big],
\end{align*}
where we have used that $\gamma_F\big|_{\H_L} = - \gamma_F\big|_{\H_R} = 1$. Writing $\mathbb{F}_{\mu\nu} = F_{\mu\nu}^a \mathbb{T}^a$ and exploiting the (anti)symmetries of $\epsilon^{\mu\nu\lambda\sigma}$ and the field strength tensor $F_{\mu\nu}$ we obtain \eqref{eq:anomresult}.
\end{proof}

This result should hold for any generator $T^a$ of the Lie algebra $su(\A)$ of the gauge group (cf.~\eqref{eq:lie-alg}). So, if we want the theory to be anomaly free, \eqref{eq:anomresult} should be zero. \\

We apply this general result to the models for which the Lie algebra is $su(\A) = u(1) \oplus su(2) \oplus su(3)$; the $T^a$ can separately be $Y \equiv i(1, 0, -\frac{a}{3b}1_3)$, $(0, \sigma^a, 0)$ and $(0, 0, \tau^a)$, the generators of the Lie algebra of the SM. Here $\sigma^a$ and $\tau^a$ denote the Pauli and Gell-Mann matrices respectively. Since $\tr\sigma^i = \tr \tau^a = 0$, the gravitational term of \eqref{eq:anomresult} only gets a contribution from the $u(1)$-part: 
\begin{align*}
	\tr_{\H_L} (Y \otimes 1 - 1 \otimes Y^o) - \tr_{\H_R} (Y \otimes 1 - 1\otimes Y^o),\quad Y = i(1, 0, -a/3b\,1_3),
\end{align*}
where although $Y$ denotes the hypercharge \emph{generator} of $u(1)$, it is $Y \otimes 1 -1\otimes Y^o$ that represents the actual hypercharge. For the representations that appear in the Standard Model, this gives
\begin{align}
	&2(-1)M_{21} + 2\cdot 3\frac{a}{3b} M_{23} - 3\bigg(1 + \frac{a}{3b}\bigg)M_{13} - (-2) M_{\bar11} - 3\bigg(-1 + \frac{a}{3b}\bigg)M_{\bar13}\nonumber\\
	&= - 2M_{21} + 2\frac{a}{b} M_{23} - \bigg(3 + \frac{a}{b}\bigg)M_{13} + 2 M_{\bar11} + \bigg(3 - \frac{a}{b}\bigg)M_{\bar13} = 0\label{eq:SManom2}.
\end{align}
Now for the non-gravitational term in \eqref{eq:anomresult}. If we use the cyclicity of the trace, $\tr(\sigma^a) = \tr(\tau^a) = 0$ and $\tr\sigma^a\sigma^b = \tr\tau^a\tau^b = 2\delta^{ab}$, $\{\sigma^a,\sigma^b\} = 2\delta^{ab}$ and $\{\tau^a,\tau^b\} = \tfrac{4}{3}\delta^{ab} + 2d^{ab}_{\phantom{ab}c} \tau^c$, we find ---when restricting to the representations that already appear in the Standard Model--- it to be equivalent to the following relations:
\begin{subequations}
\label{eq:SManom}
\begin{align}
	\intertext{$T^a$ the hypercharge generator, $F_{\mu\nu}$ the photon field:}
	6\bigg(\frac{a}{3b}\bigg)^3M_{23} - 3\bigg(1 + \frac{a}{3b}\bigg)^3M_{13} - 3\bigg(-1 + \frac{a}{3b}\bigg)^3M_{\bar 13} + 2(-1)^3M_{21} - (-2)^3M_{\bar11} &= 0\label{eq:SManoma},\\
	\intertext{$T^a$ the hypercharge generator, $F_{\mu\nu}$ the gluon field:}
	2\frac{a}{b}M_{23} - 3\bigg(1 + \frac{a}{3b}\bigg)M_{13} - 3\bigg(- 1 + \frac{a}{3b}\bigg)M_{\bar 13} &= 0\label{eq:SManomb},\\
	\intertext{$T^a$ the hypercharge generator, $F_{\mu\nu}$ the $SU(2)$ boson field:}
	-2M_{21} + 2\frac{a}{b}M_{23} &= 0\label{eq:SManomc},\\
	\intertext{$T^a$ a Gell-Mann matrix, $F_{\mu\nu}$ the gluon field:}
	2M_{23} - M_{13} - M_{\bar13} &= 0.\label{eq:SManomd}
\end{align}
\end{subequations}
All other combinations are seen to vanish. It is evident that the demand of the cancellation of anomalies puts rather stringent constraints on the multiplicities. One well-known result can now be re-confirmed.

\begin{prop}
The minimal Standard Model is anomaly free. 
\end{prop}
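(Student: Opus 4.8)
The plan is to confirm that the anomaly polynomial \eqref{eq:anomresult} vanishes for the Standard Model by substituting its multiplicities into the scalar conditions \eqref{eq:SManoma}--\eqref{eq:SManomd} and into the gravitational condition \eqref{eq:SManom2}. Recall from the Remark above that the $\nu$SM---the minimal Standard Model, carrying one right-handed neutrino per generation---is the model with $M_{11}=M_{1\bar1}=M_{12}=M_{13}=M_{\bar13}=M_{23}=3$ and all other multiplicities zero. Using the reality constraint $M_{N_iN_j}=M_{N_jN_i}$, this reads, in the notation of the anomaly relations, as $M_{21}=M_{\bar11}=M_{13}=M_{\bar13}=M_{23}=3$, while $M_{11}$---the right-handed neutrino, a gauge singlet---does not occur in any anomaly condition.

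Next I would fix the one continuous quantity entering, the ratio $a/b$. Feeding the above multiplicities into \eqref{eq:a2}--\eqref{eq:b2} gives $a=2\cdot3+2\cdot3+3(3-3)=12$ and $b=3+3+2\cdot3+6\cdot0=12$, so $a/b=1$ and $a/3b=1/3$; these are exactly the hypercharge values of Table~\ref{tab:hypercharges-SM}, and $a/b=1$ is moreover the value singled out by the $SU(2)$-anomaly condition \eqref{eq:SManomc}, which reads $-2M_{21}+2(a/b)M_{23}=0$. With $a/b=1$ in hand the remaining checks are one-line arithmetic: \eqref{eq:SManomd} becomes $2M_{23}-M_{13}-M_{\bar13}=6-3-3=0$; \eqref{eq:SManomb} becomes $6-3\cdot\tfrac43\cdot3+3\cdot\tfrac23\cdot3=6-12+6=0$; the gravitational relation \eqref{eq:SManom2} becomes $-6+6-4\cdot3+6+2\cdot3=0$; and the cubic $U(1)$ relation \eqref{eq:SManoma} becomes $6\cdot\tfrac{1}{27}\cdot3-3\cdot\tfrac{64}{27}\cdot3+3\cdot\tfrac{8}{27}\cdot3-2\cdot3+8\cdot3=\tfrac{18-576+72}{27}+18=0$. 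Since all other combinations of generators were already seen to give vanishing traces, every component of \eqref{eq:anomresult} indeed vanishes.

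I do not anticipate a real obstacle: the proposition amounts to checking that the general formula \eqref{eq:anomresult} reproduces the textbook cancellation of the Standard Model anomalies. The only place where a little care is required is bookkeeping---being consistent about which relations contain $a/3b$ (the eigenvalue of the hypercharge generator $Y$ on the colour factor) and which contain $a/b$ (after two hypercharge insertions have been combined), applying $M_{N_iN_j}=M_{N_jN_i}$ uniformly, and noting that the right-handed neutrino, although indispensable for the paper's other demands, is invisible to anomaly cancellation.
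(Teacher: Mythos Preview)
Your proposal is correct and follows essentially the same approach as the paper: both identify the Standard Model multiplicities $M_{11}=M_{\bar11}=M_{21}=M_{13}=M_{\bar13}=M_{23}=3$, compute $a=b=12$, and then verify the anomaly relations \eqref{eq:SManom2} and \eqref{eq:SManoma}--\eqref{eq:SManomd}. The only difference is that the paper leaves the verification as ``readily seen to hold'' while you spell out the arithmetic explicitly, including the useful observation that the right-handed neutrino multiplicity $M_{11}$ drops out of every anomaly condition.
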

\proof
A glance back at the remark just below \eqref{eq:ab2} shows that the non-zero multiplicities of the representation of $\A_F =\C \oplus \Qu \oplus M_3(\C)$ in $\H_F$ are $M_{11} = M_{\bar 11} = M_{21} = M_{13}=M_{\bar 1 3} = M_{23} = 3$ and $a=b=12$, for which the above equations are readily seen to hold. 
\endproof

Two comments are in order here. In using the demand of anomaly cancellation for validating extensions of the Standard Model, we need to know what the value of the grading (chirality) $\gamma$ is on each of the representations. However, there is no such thing as a canonical expression for a grading\footnote{A rather strict demand on the grading would be that of \emph{orientability} \cite{C96, KR97}: for a set $I\ni i$ there exist $a_i, b_i \in \A$ such that $\gamma = \sum_i a_i \otimes b_i^o$. But since already the grading of the SM (with a right-handed neutrino included) is not of this form \cite{ST06}, demanding it for $\gamma$ would in fact be too strict.}, which in principle limits the scope of these constraints to the particles we know the chirality of, i.e.~the SM particles. For non-SM representations we can only try both possible gradings separately. Note, secondly, that the demand for a theory to be anomaly free is most often used for determining the hypercharges of the particles involved. Here, however, these are already determined by the constraint concerning the gauge group, causing the role of anomaly cancellation to be different; it may be used to put constraints on the multiplicities of the representations.

\subsection{GUT relation}\label{sec:gut}

In determining the bosonic part of the action functional \eqref{eq:totalaction} by means of a heat kernel expansion, one obtains \cite{CM97} from the $a_4(x, D_A^2)$-term a contribution 
\begin{align*}
	- \frac{f(0)}{24\pi^2}\int_M \tr_{\H_F} \mathbb{F}_{\mu\nu}\mathbb{F}^{\mu\nu}.
\end{align*}
Here the trace runs over the entire (finite) Hilbert space $\H_F$, and $\mathbb{F}_{\mu\nu}$ is a generic symbol denoting the field strength associated to the various gauge fields. Calculating it explicitly ---assuming a Hilbert space of the form \eqref{eq:Hilbertspace}--- gives 
\begin{align*}
-	\frac{f(0)}{24\pi^2}\tr_{\H_F} \mathbb{F}_{\mu\nu}\mathbb{F}^{\mu\nu} &= -\frac{f(0)}{24\pi^2}\sum_{k} G_k(g_k)\tr_{N_k} \mathbb{F}_{\mu\nu}^{(k)}\mathbb{F}^{\mu\nu(k)}
\end{align*}
with the coefficients $G_k \equiv G_k(g_k)$ (as a function of the coupling constant $g_k$) being given by
\begin{align}
 	G_k  = 2\sum_{i, j \leq i} M_{N_i N_j}c^{(k)}_{ij}\big(q_{ij}^{(k)}\big)^2g_k^2.\label{eq:gut1}
\end{align}
Here the label $(k)$ denotes the \emph{type} of gauge field and $q^{(k)}_{ij} = -q^{(k)}_{ji}$ is the charge of the representation $\mathbf{N}_i \otimes \mathbf{N}_j^o$ associated to the gauge field $A^{(k)}_\mu$. For the number $c^{(k)}_{ij}$ we have in the case that the only representations are $\mathbf{1}$, $\overline{\mathbf{1}}$, $\mathbf{2}$ and $\mathbf{3}$:
\begin{align*}
	c^{(k)}_{ij} &= \begin{cases} N_i & \text{if\ } k = j,\\ N_j & \text{if } k= i, \\ N_iN_j & \text{else}. \end{cases}
\end{align*}
In the description of the SM from NCG, the three coefficients are precisely such that upon equating them to the normalisation constant $-1/4$ in front of the kinetic term ---as is customary--- they automatically satisfy the GUT-relation \eqref{eq:gutrel} \cite[\S 16.1]{CCM}. Now certainly in reality, with the particle content of the Standard Model alone, the coupling constants do not meet at a single energy scale. But first of all this feature is too specific to disregard it as a mere coincidence and secondly the entire predictive power of NCG relies \cite[\S 8]{SD11} on it: if it has to say anything more about reality than does the conventional approach to the Standard Model (or any of its extensions), we should take this feature seriously. Furthermore much of the `beyond the Standard Model' research has been conducted in a setting that is characterized by coupling constant unification. \emph{To this end we promote that the coupling constants satisfy the GUT relation from a feature to a demand.}\footnote{Certainly at some point one should check that a Hilbert space that satisfies the GUT-relation is compatible with a crossing of the coupling constants as obtained using the Renormalization Group Equations and the very same particle content.} The nature of the constants $G_k$ is then such that it allows one to put constraints on the Hilbert space.\\

\begin{table}[h]
\begin{tabularx}{\textwidth}{XccccXccccXccccX}
\toprule
 & 	 		& $G_1$ 		&$G_2$&$G_3$&& 				& $G_1$ 		&$G_2$&$G_3$&\\
\midrule
 & \rep{1}{1}		& 0 			& 0 & 0 & & \rep{\bar 1}{\bar 1}	& 0 			& 0 & 0 & \\ 
 & \rep{1}{\bar 1}	& 8 			& 0 & 0 & & \rep{\bar 1}{2}		& 4 			& 2 & 0 & \\ 
 & \rep{1}{2}		& 4 			& 2 & 0 & & \rep{\bar 1}{3}		&$6(-1+\frac{a}{3b})^2$	& 0 & 2 & \\ 
 & \rep{1}{3}		&$6(1+\frac{a}{3b})^2$	& 0 & 2 & & \rep{\bar 1}{\bar 3}	&$3(-1+\frac{a}{3b})^2$ & 0 & 2 & \\
 & \rep{1}{\bar 3}	&$6(1-\frac{a}{2b})^2$	& 0 & 2 & & 				&   			&   &   & \\
 &			&			&   &   & & \rep{3}{3} 			& 0  			& 0 & 12 & \\
 & \rep{2}{2}		& 0          & 8  & 0 & & \rep{3}{\bar3}		& $9(-\frac{2a}{3b})^2$ & 0 & 12& \\
 & \rep{2}{3}		&12$(\frac{a}{3b})^2$	& 6 & 4 & & \rep{\bar 3}{\bar3}		& 0			& 0 & 12 & \\
 & \rep{2}{\bar 3}	&12$(-\frac{a}{3b})^2$	& 6 & 4 & &				&   			&   & 	&\\
\bottomrule
\end{tabularx}
\caption{The contributions $c^{(k)}_{ij}q_{ij}^{(k)}$ for any of the possible representations of the Standard Model algebra contributes to either of the three coupling constants. These already include a factor $2$ from the fact that in taking the sum every representation appears twice.}
\label{tab:coeff}
\end{table}

In the case of the Standard Model algebra  there are three different gauge fields, since the gauge field acting on $\overline{\mathbf{1}}$ is the same as the one acting on $\mathbf{1}$. The $SU(2)$ charge equals $1$ on the representation $\mathbf{2}$ and $0$ on everything else and similarly the $SU(3)$ charge equals $1$ on the representation $\mathbf{3}$ and $0$ on everything else. The charges for the $U(1)$ gauge field are determined by \eqref{eq:u1charge}. The coefficients with which a certain representation contributes to any of the three coupling constants can be found in Table \ref{tab:coeff}. If we only allow the representations that already appear in the Standard Model ---$\mathbf{1}, \overline{\mathbf{1}}, \mathbf{2}, \mathbf{3}$--- we get for the three different coefficients in \eqref{eq:gut1}:
\begin{align*}
	 G_{1} &= 2\Big[4M_{1\bar 1} + 2M_{12} + 2M_{\bar 12} + 3\Big(1 + \frac{a}{3b}\Big)^2M_{13} + 3\Big(1 - \frac{a}{3b}\Big)^2M_{\bar 1 3} + 6\Big(\frac{a}{3b}\Big)^2M_{32}\Big],\\
	 G_{2} &= 2[M_{21} + M_{2\bar 1} + 3M_{23} + 4M_{22}],\\
	 G_{3} &= 2[M_{31} + M_{3\bar 1} + 2M_{32} + 6M_{33}].
\end{align*}
Now upon taking the trace of kinetic terms of the gauge fields, the second and third line get an additional factor $2$ because of the normalisation of the Pauli and Gell-Mann matrices respectively. The demand for the GUT-relation \eqref{eq:gutrel} is then that
\begin{align*}
	3G_1 = 10G_2 = 10G_3
\end{align*}
i.e. in terms of multiplicities
\begin{align}
&\phantom{= 10}\Big[12M_{1\bar1} + 6M_{12} + 6M_{\bar 12} + 9\Big(1 + \frac{a}{3b}\Big)^2M_{13} + 9\Big(- 1 + \frac{a}{3b}\Big)^2M_{\bar 1 3} + 2\frac{a^2}{b^2}M_{32}\Big]\nonumber\\
&= 10[M_{21} + M_{2\bar 1} + 3M_{23} + 4M_{22}]\nonumber\\
&= 10[M_{31} + M_{3\bar 1} + 2M_{32} + 6M_{33}]\label{eq:gut2}.
\end{align} 

\subsection{Bringing it all together}

We summarize Sections \ref{sec:alg}, \ref{sec:gauge}, \ref{sec:anomalies} and \ref{sec:gut} by saying that for any finite spectral triple whose
\begin{itemize}
	\item[C.1] Lie algebra corresponding to the gauge group is the one of the Standard Model;
	\item[C.2] particle content contains at least one copy of each of the Standard Model particles; 
	\item[C.3] particle content is (chiral gauge) anomaly free;
	\item[C.4] three coupling constants satisfy the GUT relation \eqref{eq:gutrel};
\end{itemize}
the algebra is $\A = M_3(\com) \oplus \Qu \oplus \com$ and the multiplicities of the fermions are constrained by relations \eqref{eq:SManom} and \eqref{eq:gut2} (with $a$ and $b$ appearing in this last relation determined by \eqref{eq:ab}). \\

The readers may ask themselves how and to what extent this approach distinguishes itself from the conventional one, i.e.~the non-NCG approach to (beyond the) SM physics; what more does the former offer compared to the latter? In our opinion there are two main differences. The first is the link between the value for the coupling constants and the Hilbert space ---making the existence of the GUT-relation a consequence of the particle content. Secondly, the demand for the gauge group to be that of the Standard Model made it possible to determine the charges of the featured particles in terms of the powers $a$ and $b$, changing the role of the demand of anomaly cancellations to determining multiplicities. (Both differences are a fruit of the meticulous path from the principles of NCG to the particle content of the Standard Model.)\\

It might be worthwhile to explicate how this article relates to several others having a similar approach, most notably \cite{ISS03,JS05,JSS05,JS08,JS09}. The main differences are that this is the first time that the GUT-relation is explicitly used for constraining the multiplicities, that the demand for the gauge group had not been articulated before in terms of the content of the Hilbert space. Furthermore, this analysis does not regard the question what finite Dirac operators are allowed, so we have no demands related to this. Finally, the articles \cite{ISS03,JS05,JSS05} consider finite spectral triples of KO-dimension $0$, instead of $6$.

\section{Solutions of the constraints}

In the following sections we investigate what the above constraints can tell us about some extensions of the Standard Model. But first we will employ these constraints to recover the latter.

\subsection{SM and extensions thereof}\label{sec:sm}

Let $\A = \mathbb{C} \oplus \mathbb{H} \oplus M_3(\mathbb{C})$ and let $\H$ be such that it only contains representations that are present in the Standard Model: $\mathbf{2}_L \otimes \mathbf{1}^o$, $\overline{\mathbf{1}}_R\otimes \mathbf{1}^o$, $\mathbf{2}_L \otimes \mathbf{3}^o$, $\mathbf{1}_R \otimes \mathbf{3}^o$, and $\overline{\mathbf{1}}_R\otimes \mathbf{3}^o$ (representing $(e_L, \nu_L)$, $e_R$, $(u_L, d_L)$, $u_R$ and $d_R$ respectively) and their conjugates. We leave the possibility open for a right-handed neutrino $\mathbf{1}_R \otimes \mathbf{1}^o$. This implies that $\H$ is characterised by a $6$-tuple 
\begin{align*}
 	(M_{11}, M_{13}, M_{\bar 11}, M_{\bar 13}, M_{21}, M_{23})\in \mathbb{N}^6.
\end{align*}
The subscript $L$ ($R$) refers to the value $1$ ($-1$) of $\gamma_F$ on the particular representation.

\begin{lem}\label{lem:solSM}
	Upon demanding C.1 -- C.4 for this spectral triple, the only solution for the multiplicities is:	
	\begin{align*}
		(M_{11}, M_{13}, M_{\bar 11}, M_{\bar 13}, M_{21}, M_{23}) = (M, M, M, M, M, M) 
	\end{align*}
	with $M \in \mathbb{N}$, the number of \emph{generations} or \emph{families}.
\end{lem}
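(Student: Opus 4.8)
The plan is to substitute the six unknowns $(M_{11}, M_{13}, M_{\bar11}, M_{\bar13}, M_{21}, M_{23})$ into the constraint equations derived earlier and show that the only solution over $\mathbb{N}^6$ is the fully-diagonal one. Since $\H$ only contains the representations $\mathbf{1},\overline{\mathbf{1}},\mathbf{2},\mathbf{3}$, the relevant formulas are \eqref{eq:ab2} for $a$ and $b$, the anomaly relations \eqref{eq:SManom} and the GUT relation \eqref{eq:gut2}; all other multiplicities are zero. Note that here $M_{1\bar3}$ etc.\ are absent, so $a = 2M_{11} - 2M_{\bar11} + 2M_{21}\,\text{(with the }\mathbf{2}\text{-factor, i.e. }M_{12}=M_{21}\text{)} + 3(M_{13}-M_{\bar13})$ and $b = M_{31}+M_{3\bar1}+2M_{32}+6M_{33}$; but the only $\mathbf{3}$-representations present are $\mathbf{1}\otimes\mathbf{3}^o$, $\overline{\mathbf{1}}\otimes\mathbf{3}^o$, $\mathbf{2}\otimes\mathbf{3}^o$, so in the $a,b$ bookkeeping one must be careful with which index sits first; I would first pin down $a$ and $b$ explicitly in terms of the six multiplicities, using the convention fixed by Table~\ref{tab:hypercharges-SM} (so that $b = M_{13} + M_{\bar13} + 2M_{23}$, reading $\mathbf{1}\otimes\mathbf{3}^o$ as contributing to the $\det m$ power).

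First I would exploit the two cleanest constraints. The Gell-Mann/gluon anomaly \eqref{eq:SManomd} gives immediately $M_{13} + M_{\bar13} = 2M_{23}$. The $SU(2)$-hypercharge anomaly \eqref{eq:SManomc} gives $M_{21} = \frac{a}{b}M_{23}$. Feeding $M_{13}+M_{\bar13}=2M_{23}$ into the definition of $b$ yields $b = 4M_{23}$, a handy simplification. Next, the gluon-hypercharge anomaly \eqref{eq:SManomb} combined with \eqref{eq:SManomd} forces a relation that, together with \eqref{eq:SManomc}, should let me solve for the ratio $a/b$ (I expect $a/b = 1$ will drop out, forcing $a = b = 4M_{23}$ and hence $M_{21} = M_{23}$). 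Then the gravitational anomaly \eqref{eq:SManom2} and the cubic $U(1)$ anomaly \eqref{eq:SManoma}, now with $a/3b = 1/3$ substituted, become linear (resp.\ after using the quadratic relations, effectively linear) equations in $M_{11}, M_{\bar11}, M_{13}, M_{\bar13}, M_{21}, M_{23}$; I anticipate they will force $M_{\bar11} = M_{13}$ type identifications and, crucially, $M_{13} = M_{\bar13}$ (so each is $= M_{23}$) once combined with \eqref{eq:SManomd}. Finally the GUT relation \eqref{eq:gut2}, with $M_{22}=M_{33}=M_{1\bar1}=M_{2\bar1}=M_{31}=M_{3\bar1}=M_{32}=0$ and $a/b=1$, reduces to something like $9(4/3)^2 M_{13} + 9(2/3)^2 M_{\bar13} = 10\cdot 3 M_{23} = 10\cdot 2\cdot(\text{something})$, which should be the equation that pins the last degree of freedom and forces $M_{11} = M_{23}$ as well, giving the common value $M$.

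The expected main obstacle is bookkeeping discipline rather than any deep idea: I must be scrupulously consistent about (i) which of $M_{N_iN_j}$ with $i\le j$ corresponds to which physical representation and its chirality (the $L$/$R$ assignment that decides whether a term enters with $\pi_L$ or $\pi_R$, hence its sign in \eqref{eq:SManom2}--\eqref{eq:SManoma}), and (ii) the precise value of $a/b$, since the hypercharges in Table~\ref{tab:coeff} and in \eqref{eq:SManom} all carry factors of $a/3b$ and an arithmetic slip there propagates everywhere. A secondary subtlety is confirming that the system does not admit a spurious solution with some $M_{23}=0$: if $M_{23}=0$ then $b=0$ and the hypercharge normalization \eqref{eq:u1charge} degenerates, so one must separately rule out the degenerate branch (it fails C.2, since the quark doublet $\mathbf{2}\otimes\mathbf{3}^o$ would be absent). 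Once those points are handled, the remaining algebra is a short linear solve, and the uniqueness of $(M,M,M,M,M,M)$ follows. I would close by remarking that $M$ is unconstrained because all four demands C.1--C.4 are homogeneous (scale-invariant) in the multiplicities, which is exactly why the number of generations is not fixed by this analysis.
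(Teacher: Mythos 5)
Your overall route is the paper's: the anomaly relations \eqref{eq:SManomd} and \eqref{eq:SManomb} give $M_{13}=M_{\bar13}=M_{23}$, while \eqref{eq:SManomc} and the gravitational condition \eqref{eq:SManom2} give $M_{21}=M_{\bar11}=\tfrac{a}{b}M_{23}$, and the GUT relation then forces $a=b$. But two steps of your plan would fail as written. First, $a/b=1$ does \emph{not} ``drop out'' of the anomaly equations: once the relations above hold, the cubic condition \eqref{eq:SManoma} is identically satisfied for \emph{every} value of $a/b$ (with $t=a/3b$ the combination $6t^3-3(1+t)^3-3(t-1)^3=-18t$ cancels exactly against $-2M_{21}+8M_{\bar11}=6\tfrac{a}{b}M_{13}$), so the ratio is fixed only by the GUT demand $G_2=G_3$ in \eqref{eq:gut2}, i.e.\ $M_{21}+3M_{23}=M_{13}+M_{\bar13}+2M_{23}$.

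Second, and more seriously, the GUT relation cannot ``pin the last degree of freedom and force $M_{11}=M_{23}$'': the representation $\mathbf{1}\otimes\mathbf{1}^o$ is a gauge singlet of hypercharge zero and contributes to none of $G_1, G_2, G_3$ (first line of Table \ref{tab:coeff}), nor to any anomaly. The only constraint that sees $M_{11}$ is the explicit formula \eqref{eq:a2} for $a$, used together with $a=b=4M_{21}$; this is the paper's final step. Here your bookkeeping has already slipped in exactly the way you feared: you wrote $a=2M_{11}-2M_{\bar11}+2M_{21}+3(M_{13}-M_{\bar13})$, but the negative term in \eqref{eq:a2} is $-2M_{\bar1\bar1}$ (the representation $\overline{\mathbf{1}}\otimes\overline{\mathbf{1}}^o$, absent here), not $-2M_{\bar11}$; the multiplicity $M_{\bar11}=M_{1\bar1}$ cancels out of $a$ altogether, as remarked below \eqref{eq:ab}. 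With your expression the last step would read $2M_{11}-2M_{21}+2M_{21}=4M_{21}$, giving $M_{11}=2M_{21}$ and contradicting the lemma; the correct computation $2M_{11}+2M_{21}=4M_{21}$ yields $M_{11}=M_{21}$. Your closing observations (excluding the degenerate branch $M_{23}=0$ via C.2, and homogeneity explaining why $M$ itself is free) are sound.
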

\begin{proof}
From combining \eqref{eq:SManoma} to \eqref{eq:SManomd} we already infer that
\begin{align}
	M_{13} &= M_{\bar13} = M_{23},& M_{\bar 1 1} &= M_{21} = \frac{a}{b}M_{13} \label{eq:smanom2}.
\end{align}
Inserting these equations into the demand for the GUT-relation \eqref{eq:gut2} (but discarding all non-SM representations), gives:
\begin{align*}
2\bigg(9 + 9 \frac{a}{b} + 2\frac{a^2}{b^2}\bigg)M_{13} &= 40M_{13} = \bigg(30 + 10\frac{a}{b}\bigg)M_{13}
\end{align*}
yielding $a = b$, i.e.~the hypercharges are those of the SM and all multiplicities equal each other, except for $M_{11}$ (the right-handed neutrino), which is still unrestricted. Now from the expressions for $a$ and $b$ we find:
\begin{align*}
	2M_{11} + 2M_{12} = 4M_{21},
\end{align*}
establishing the result. 
\end{proof}
Note that according to the previous Lemma, a right-handed neutrino in all generations is necessary from our point of view. That the Standard Model with three generations and an equal number of right-handed neutrinos is at odds with the orientability axiom, was already noted by C.~Stephan \cite{ST06}. For future convenience we succinctly write the solution to the previous Lemma as:
\begin{align*}
	\H_{\mathrm{SM}}' &:= \big(\mathcal{E} \oplus \mathcal{E}^o\big)^{\oplus M},\qquad \mathcal{E} = (\mathbf{2}_L \oplus \mathbf{1}_R \oplus \overline{\mathbf{1}}_R) \otimes (\mathbf{1} \oplus \mathbf{3})^o,
\end{align*}
where we added the accent because the bare SM in fact has $M = 3$ and does not contain any right-handed neutrinos.\\

We can try to get the most out of the constraints we have derived. We will first focus our attention on the only non-adjoint representation that is absent in the SM --- given that we only use $\mathbf{1}$, $\overline{\mathbf{1}}$, $\mathbf{2}$ and $\mathbf{3}$:
\begin{lem}
	An extension of the SM with only a certain (non-zero) number of copies of $\mathbf{2} \otimes \overline{\mathbf{1}}^o$ (and its conjugate) is possible, provided it is of negative chirality. In that case we have for the multiplicities:
	\begin{align*}
		M_{21} + M_{2\bar1} &= M_{23} = M_{1\bar1} = M_{13} = M_{\bar 13}, & 
		M_{11} &= 3M_{2\bar 1} + M_{21}.
	\end{align*}
\end{lem}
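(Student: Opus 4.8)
The plan is to redo the computation of Lemma~\ref{lem:solSM}, but now allowing the extra representation $\mathbf{2}\otimes\overline{\mathbf{1}}^o$ (and its conjugate $\overline{\mathbf{1}}\otimes\mathbf{2}^o$) with some multiplicity $M_{2\bar1}=M_{\bar12}$, while keeping all other non-SM multiplicities zero. First I would treat the two chirality choices for this new representation separately, as the discussion after the anomaly Lemma prescribes, and show the positive-chirality choice is inconsistent. For this I would revisit the anomaly relations: with $\mathbf{2}\otimes\overline{\mathbf{1}}^o$ present, the expressions \eqref{eq:a}--\eqref{eq:b} for $a$ and $b$ pick up the new term $-2M_{\bar12}$ in $a$ (note $M_{2\bar1}$ does not appear in $b$), and the anomaly constraints \eqref{eq:SManoma}--\eqref{eq:SManomd} must be rederived including the hypercharge $-1$ contribution of $\mathbf{2}\otimes\overline{\mathbf{1}}^o$ (which coincides with that of $\mathbf{2}\otimes\mathbf{1}^o$ up to sign of the relevant trace) and noting it carries no $SU(3)$ charge so \eqref{eq:SManomd} is unaffected. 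The key observation driving the chirality constraint is that if the new representation were right-handed it would enter the anomaly sums with the opposite sign from the left-handed $\mathbf{2}\otimes\mathbf{1}^o$, and combining with the GUT relation one finds no solution with $M_{2\bar1}>0$; if left-handed, it adds coherently and one gets a consistent enlargement.

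Concretely, for the negative-chirality (left-handed) case I would carry out the following steps. From the modified versions of \eqref{eq:SManoma}--\eqref{eq:SManomd}, with $\mathbf{2}\otimes\overline{\mathbf{1}}^o$ contributing to the $SU(2)$--$U(1)^2$ and mixed terms alongside $\mathbf{2}\otimes\mathbf{1}^o$, derive the analogue of \eqref{eq:smanom2}: I expect $M_{13}=M_{\bar13}=M_{23}$ to survive (these come from \eqref{eq:SManomb} and \eqref{eq:SManomd}, the quark sector, unaffected by the new lepton-like representation), and the $SU(2)$ anomaly \eqref{eq:SManomc} to become $-2(M_{21}+M_{2\bar1})+2\tfrac{a}{b}M_{23}=0$, i.e.\ $M_{21}+M_{2\bar1}=\tfrac{a}{b}M_{13}$. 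Then plug these into the GUT relation \eqref{eq:gut2}, now including the $\mathbf{2}\otimes\overline{\mathbf{1}}^o$ contribution $2M_{\bar12}$ to $G_1$ (charge $-1$, so coefficient $4$ after the factor $2$, giving $6M_{\bar12}$ after the overall $3$) and $2M_{2\bar1}$ to $G_2$; matching $3G_1=10G_2=10G_3$ should again force $a=b$, hence SM hypercharges, and then the three equated quantities collapse to relations among the multiplicities. Finally, re-examining $a=b$ with the new term $-2M_{\bar12}$ in $a$ versus \eqref{eq:b}, together with the $U(1)$-gravitational anomaly \eqref{eq:SManom2} (also modified by the new representation), should yield $M_{11}=3M_{2\bar1}+M_{21}$ and pin down the common value $M_{21}+M_{2\bar1}=M_{23}=M_{1\bar1}=M_{13}=M_{\bar13}$, which is exactly the claimed conclusion.

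The main obstacle I anticipate is bookkeeping: getting the signs and the hypercharge powers right for $\mathbf{2}\otimes\overline{\mathbf{1}}^o$ in both the anomaly polynomials \eqref{eq:SManoma}--\eqref{eq:SManomd} and the GUT coefficients \eqref{eq:gut2}, and in particular correctly handling the fact that $M_{1\bar1}$ (which does not appear in $a$, as remarked below \eqref{eq:ab}) and $M_{2\bar1}$ enter asymmetrically. A secondary subtlety is the chirality argument: one must check that \emph{only} the left-handed assignment is consistent, which requires showing the right-handed system of equations (anomalies plus GUT) has no solution with $M_{2\bar1}>0$ — I would do this by deriving a contradiction such as forcing $M_{2\bar1}=0$ or a negative multiplicity. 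Once the linear algebra is set up correctly, the rest is routine elimination as in the proof of Lemma~\ref{lem:solSM}.
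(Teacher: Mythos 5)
Your overall strategy is the same as the paper's: rerun the elimination of Lemma~\ref{lem:solSM} with $M_{2\bar1}=M_{\bar12}$ switched on, treat the two chirality assignments separately, rule one out, and in the surviving case combine the anomaly relations with $G_2=G_3$ to get $a=b$, then use \eqref{eq:a}--\eqref{eq:b} to extract $M_{11}=3M_{2\bar1}+M_{21}$. The system you write down for the surviving case --- $-2(M_{21}+M_{2\bar1})+2\tfrac{a}{b}M_{23}=0$, the unchanged quark-sector relations, the $6M_{\bar12}$ in $3G_1$ and the $M_{2\bar1}$ in $G_2$, and the $-2M_{\bar12}$ in $a$ --- agrees with the paper's and does yield the stated multiplicities.

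However, the step that carries the actual content of the lemma, namely \emph{which} chirality is admissible, rests on two compensating sign errors. First, the hypercharge of $\mathbf{2}\otimes\overline{\mathbf{1}}^o$ is $+1$, not $-1$: conjugating the $\mathbf{1}$ flips the $U(1)$ charge, so $(Y\otimes 1-1\otimes Y^o)$ gives $0-(-1)=+1$, \emph{opposite} to the $-1$ of $\mathbf{2}\otimes\mathbf{1}^o$ (cf.\ Table~\ref{tab:hypercharges-SM}). Second, by the paper's explicit convention the subscript $L$ corresponds to $\gamma_F=+1$, so ``negative chirality'' means the new representation is \emph{right}-handed, not left-handed as you assert. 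Your key observation --- that the left-handed assignment ``adds coherently'' with $\mathbf{2}\otimes\mathbf{1}^o$ while the right-handed one opposes it --- is therefore backwards on both counts. The correct statement is that, precisely because the hypercharge is opposite, only the \emph{opposite} (negative) chirality makes $M_{21}$ and $M_{2\bar1}$ enter every anomaly relation with the same sign; with equal (positive) chirality the anomaly-plus-GUT system gives $(1-\tfrac{a}{b})M_{21}=(-1+\tfrac{a}{b})M_{2\bar1}$ and $a/b\in\{-4,1\}$, forcing either all multiplicities or $M_{2\bar1}$ to vanish. Because every relevant term is a product of the chirality sign with an odd power of the hypercharge (or with a hypercharge-independent Casimir), your two mistakes cancel in every equation and you land on the right final relations --- but the justification of the proviso ``provided it is of negative chirality'' is not correct as written, and fixing either sign alone would lead you to the wrong chirality assignment.
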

\begin{proof}
	Since the representation under consideration is hypothetical, we do not know whether it is right- or left-handed, but we can try either possibility. We get a $\pm 2M_{2\bar1}$ extra in \eqref{eq:SManoma}, \eqref{eq:SManomc} and \eqref{eq:SManom2}, where the sign depends on the chirality. \\

Taking the latter to be positive (i.e.~it has the same chirality as $M_{21}$), we find from anomaly cancellation that
$
	M_{23} = M_{31} = M_{\bar11}, \frac{a}{b}M_{13} = M_{\bar 11}.
$
 Together with the demand $G_2 = G_3$ this gives 
\begin{align*}\bigg(1 - \frac{a}{b}\bigg)M_{21} = \bigg(-1 + \frac{a}{b}\bigg)M_{2\bar1}.\end{align*}The other GUT-demand $3/5G_1 = G_2$ solves $a/b = -4 \lor 1$. Using all relations between multiplicities, the first solution demands all multiplicities to vanish, the second solution only sets $M_{2\bar1}$ to be zero and we are back at the SM with right-handed neutrinos. 

Using the other value for the chirality, $M_{21}$ and $M_{2\bar1}$ enter all relations in the same way, except in \eqref{eq:ab} ---whose previous use was to determine $M_{11}$ since that is the only constraint it appears in. This means that we cannot exactly solve all multiplicities. Instead we get the results as stated in the Lemma above.
\end{proof}
Looking at what we observe in particle experiments, the above Lemma suggests that either $\mathbf{2} \otimes \overline{\mathbf{1}}^o$ is absent after all, or that there is a non-zero $M_{2\bar1}$ implying that all other particles ---except for $\mathbf{2} \otimes \mathbf{1}^o$--- come in at least one more generation than is currently observed. The representation $\mathbf{1} \otimes \mathbf{1}^o$ (the right-handed neutrino) on the other hand, needs to have an even higher multiplicity than the others.

\subsection{Supersymmetric extensions}\label{sec:susy}

In supersymmetry (see \cite{S85}) one fully exploits all space-time symmetries (as captured by the Poincar\'e group) to arrive at theories in which all existing particles have \emph{superpartners}; particles residing in the same representation of the gauge group, but necessarily differing in spin by the amount of $\nicefrac{1}{2}$. In this article we focus on supersymmetric extensions of the SM in which each SM particle has one supersymmetric partner\footnote{The most common of these is the Minimal Supersymmetric Standard Model (MSSM) (cf.~\cite{CEKKLW05}).}. Thus, we want to extend the finite Hilbert space by the \emph{gauginos}, the superpartners of the gauge bosons (the latter corresponding to the components of the algebra), and the higgsinos (the superpartners of the Higgses \cite[\S 2.2]{CEKKLW05}). The former are in the representations $\mathbf{1} \otimes \mathbf{1}^o$, $\mathbf{2} \otimes \mathbf{2}^o$ and $\mathbf{3} \otimes \mathbf{3}^o$ and the latter are in $\mathbf{2} \otimes \mathbf{1}$ and $\mathbf{2} \otimes \overline{\mathbf{1}}^o$. In order for the result to have the right number of degrees of freedom, we need to take two copies of the gaugino representations, both having a different value of the grading. This allows us not only to project onto the physical states of $\H^+$ [c.f.~\eqref{eq:fermionaction}], halving the number of degrees of freedom, but it also allows for the possibility of defining gaugino masses. Having a real structure $J$ makes the higgsinos automatically come with their (charge) conjugates. Since we already have particles in the representation $\mathbf{1} \otimes \mathbf{1}^o$ and $\mathbf{2} \otimes \mathbf{1}^o$ in the SM, we will distinguish between the SM and supersymmetric versions of this representation by putting a tilde above the latter. In the notation introduced above Lemma \ref{lem:solSM} we write
\begin{align}
	(M_{11} + \widetilde M_{11}, M_{13}, M_{\bar 11}, M_{\bar 13}, M_{21} + \widetilde M_{21}, M_{23}, \widetilde M_{2\bar 1}, \widetilde M_{22}, \widetilde M_{33}) \in\mathbb{N}^{9}\label{eq:solmult2}
\end{align}
with $\widetilde M_{21} = \widetilde M_{2\bar 1} = \widetilde M_{11} = \widetilde M_{22} = \widetilde M_{33} = 1$. Written differently:
\begin{align*}
	\H_{\mathrm{MSSM}}' &:=\H_{\mathrm{SM}}' \oplus \H_{\mathrm{gauginos}} \oplus \H_{\mathrm{higgsinos}}
\end{align*}
with
\begin{align*}
		 \H_{\mathrm{gauginos}} &= \big(\mathbf{1} \otimes \mathbf{1}^o \oplus \mathbf{2} \otimes \mathbf{2}^o \oplus \mathbf{3} \otimes \mathbf{3}^o\big)^{\oplus 2} \simeq \big(\com \oplus M_2(\com) \oplus M_3(\com)\big)^{\oplus 2}\\
		 \H_{\mathrm{higgsinos}} &= \mathbf{2} \otimes \overline{\mathbf{1}}^o \oplus \mathbf{2} \otimes \mathbf{1}^o \oplus \mathbf{1} \otimes \mathbf{2}^o \oplus \overline{\mathbf{1}} \otimes \mathbf{2}^o.
	\end{align*}
We then have:
\begin{lem}
	There is no solution \eqref{eq:solmult2} for the finite Hilbert space that satisfies our demands after extending the Standard Model by two copies of the gauginos and a single copy of the higgsinos.
\end{lem}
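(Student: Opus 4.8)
The plan is to follow the same template as the preceding lemmas: substitute the prescribed extra multiplicities into the anomaly equations \eqref{eq:SManom}, the hypercharge parameters \eqref{eq:ab} and the GUT relation \eqref{eq:gut2}, and show that the resulting Diophantine system has no solution in $\mathbb{N}^{9}$.

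The first task is to record how the three new sectors feed into the constraints. The gauginos occupy the self-conjugate representations $\mathbf{1}\otimes\mathbf{1}^o$, $\mathbf{2}\otimes\mathbf{2}^o$ and $\mathbf{3}\otimes\mathbf{3}^o$, each added in two copies of opposite chirality; being self-conjugate (and in any case coming in opposite-chirality pairs, which cancel in \eqref{eq:anomresult}), they contribute nothing to the anomaly equations \eqref{eq:SManom}. They do add fixed amounts to the GUT coefficients --- by Table \ref{tab:coeff}, $+8$ to $G_2$ from $\mathbf{2}\otimes\mathbf{2}^o$ and $+12$ to $G_3$ from $\mathbf{3}\otimes\mathbf{3}^o$ --- and through \eqref{eq:ab} they shift $a$ by $2\widetilde M_{11}$ and $b$ by $6\widetilde M_{33}$, i.e., by the fixed values $2$ and $6$. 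The higgsino $\mathbf{2}\otimes\mathbf{1}^o$ carries the quantum numbers of the left-handed leptons, so it shifts $M_{21}$ (up to a chirality-dependent sign in the anomaly equations); the higgsino $\mathbf{2}\otimes\overline{\mathbf{1}}^o$ is a genuinely new multiplicity $\widetilde M_{2\bar 1}=1$, entering \eqref{eq:SManom}, \eqref{eq:gut2} and \eqref{eq:ab} exactly as in the Lemma immediately above.

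Since there is no canonical grading, I would then split on the chirality of the two higgsino doublets and, for each of the resulting cases, rerun the manipulations of Lemma \ref{lem:solSM} and of the Lemma immediately above with the shifted data. As in those lemmas, the anomaly equations \eqref{eq:SManoma}--\eqref{eq:SManomd} collapse the Standard-Model sector essentially to $M_{13}=M_{\bar 13}=M_{23}$ together with expressions for $M_{\bar 11}$ and $M_{21}$ in terms of this common value and of the ratio $a/b$, and the modified relations \eqref{eq:ab} then fix $a/b$. Feeding this into $3G_1=10G_2=10G_3$ is where the contradiction must surface: whereas in the bare model the two equalities jointly produced $a=b$ and a single free parameter, the now-inhomogeneous right-hand sides (the $+8$ and $+12$ from the gauginos, plus the higgsino pieces) over-determine them --- one equality demanding one rational value of $a/b$ and the other a different one, or else forcing one of the nonnegative integers $M_{13},M_{\bar 11},\dots$ to be negative or non-integral. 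In either case there is no admissible solution, which is the assertion of the Lemma.

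The hard part will be the first task: pinning down exactly how the new representations enter the various constraints --- the ``appears twice'' factor of $2$ already built into \eqref{eq:Hilbertspace} and Table \ref{tab:coeff}, the cancellation of the two opposite-chirality gaugino copies in the anomaly, and the precise shifts they induce in $a$ and $b$ --- so that the inhomogeneous constants entering the GUT equation come out correct. Once those are settled, what remains is the same bounded linear-algebra-over-$\mathbb{N}$ bookkeeping, plus a short finite case analysis over the higgsino chiralities, that has already been carried out in the preceding lemmas.
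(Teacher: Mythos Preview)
Your plan is essentially the paper's own strategy: observe that the gauginos are anomaly-neutral, sort out the higgsino contribution, keep the SM anomaly relations \eqref{eq:smanom2} intact so everything reduces to $M_{13}$ and $a/b$, and then show the GUT system becomes inconsistent once the fixed gaugino/higgsino constants are fed in. Two tactical points where the paper is tighter than your sketch:

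\begin{itemize}
\item You propose a case split over the higgsino chiralities. The paper avoids this: in the MSSM the two higgsino doublets $\mathbf{2}\otimes\mathbf{1}^o$ and $\mathbf{2}\otimes\overline{\mathbf{1}}^o$ carry the \emph{same} grading but opposite hypercharge, so their contributions to \eqref{eq:SManoma}, \eqref{eq:SManomc} and \eqref{eq:SManom2} cancel pairwise. This single observation replaces your four-way case analysis and leaves the SM anomaly relations literally unchanged.
\item Your order ``anomaly $\Rightarrow$ relations; then \eqref{eq:ab} fixes $a/b$; then GUT gives the contradiction'' does not quite close, because \eqref{eq:ab} still contains the free multiplicity $M_{11}$ and so cannot by itself pin down $a/b$. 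The paper runs it the other way: from $10G_2=10G_3$ one reads off $a/b=1$ directly (the gaugino constants $+4$ and $+6$ in the brackets happen to coincide after the $10$), and then \emph{both} $3G_1=10G_2$ (giving $40M_{13}+12=40M_{13}+60$) \emph{and} the explicit formulas \eqref{eq:ab2} (which now force $a\neq b$) yield independent contradictions.
\end{itemize}

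With those two adjustments your plan becomes exactly the paper's proof.
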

\begin{proof}
We can proceed in exactly the same way as in Lemma \ref{lem:solSM}, using the demands C.1 -- C.4. Since the Standard Model representations together satisfied the demands, so should separately do the newly added representations. First of all, the gauginos do not cause an anomaly, since each left representation appears both left-handed and right-handed [c.f.~\eqref{eq:anomresult}]. The higgsino in $\mathbf{2} \otimes \mathbf{1}^o$ does cause an anomaly (via the second and fourth relations in \eqref{eq:SManom}), but the other one in $\mathbf{2} \otimes \overline{\mathbf{1}}^o$ ---having the same grading but an opposite hypercharge--- cancels this anomaly again. So the relations \eqref{eq:smanom2} stay intact, reducing the a priori $6$ unknown SM-multiplicities to only one, $M_{13}$. Next, we find for the three GUT-coefficients:
\begin{align*}
 &12 \frac{a}{b}M_{13} + 6\Big(\frac{a}{b}M_{13} + 1\Big) + 6 + 9\Big(1 + \frac{a}{3b}\Big)^2M_{13} + 9\Big(-1 + \frac{a}{3b}\Big)^2M_{13} + 2\frac{a^2}{b^2}M_{13}\\
&\qquad =  \bigg[18 + 18\frac{a}{b} + 4\frac{a^2}{b^2}\bigg]M_{13} + 12, \\
 & 10\Big[ \Big(\frac{a}{b}M_{13} + 1\Big) + 1 + 3M_{13} + 4\Big] =  10\bigg[\bigg(\frac{a}{b} + 3\bigg)M_{13} + 6 \bigg]\\
&\quad\text{and}\\
 & 10[ M_{13} + M_{13} + 2M_{13} + 6] = 10[4M_{13} + 6],
\end{align*}
respectively. From equating the second and third coefficients one can infer that $a = b$. Inserting this solution into the first coefficient, one gets
\begin{align*}
	12M_{13} + 6M_{13} + 6 + 9\bigg(\frac{4}{3}\bigg)^2 + 9\bigg(-\frac{2}{3}\bigg)^2 + 2M_{13} = 40M_{13} + 12
\end{align*} 
i.e.~the GUT-relation cannot be satisfied. Moreover, inserting the extra multiplicities in \eqref{eq:ab2} shows that $a \ne b$, in contrast to what the GUT-relation told us.
%
\end{proof}

Now the previous lemma suggests that the MSSM and NCG are at odds, but there might be models which are not that different from the MSSM that do satisfy all our constraints. We could in principle restore all constraints by adding extra representations compared to the MSSM. In the light of supersymmetry these should all be a superpartner of a scalar particle that enters through a finite Dirac operator. To show that such models exist, we have

\begin{theorem}
The smallest possible extension (in the sense of lowest number of extra representations) of the MSSM that satisfies all four constraints, has six additional representations in $\H_F$. Namely, it is one of the following two possibilities for the Hilbert space $\H_{\mathrm{MSSM}}' \oplus \mathcal{F} \oplus \mathcal{F}^o$:
\begin{align}
\mathcal{F} &= (\mathbf{1} \otimes \overline{\mathbf{1}}^o)^{\oplus 4}\, \oplus\, (\mathbf{1} \otimes \mathbf{1}^o)^{\oplus 2}\label{eq:NCMSSM},
\intertext{where two of the copies of $\mathbf{1} \otimes \mathbf{1}^o$ should have a grading opposite to the other two copies; or}
\mathcal{F} &= (\mathbf{1} \otimes \overline{\mathbf{1}}^o)^{\oplus 2}\, \oplus\, (\mathbf{1} \otimes \mathbf{3}^o)^{\oplus 2}\, \oplus\, ({\bf 1} \otimes {\bf 2})\, \oplus\, ({\bf{\bar 1}} \otimes {\bf 2})
\label{eq:NCMSSM-2}.
\end{align}
\end{theorem}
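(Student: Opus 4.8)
The plan is to reduce the theorem to a finite search over candidate extensions $\mathcal{F}$ and then rule out all smaller ones by the same linear bookkeeping used in Lemma~\ref{lem:solSM} and in the preceding MSSM lemma. First I would set up the general constraint equations for the Hilbert space $\H_{\mathrm{MSSM}}' \oplus \mathcal{F} \oplus \mathcal{F}^o$ where $\mathcal{F}$ ranges over sums of the allowed representations $\mathbf{1}\otimes\mathbf{1}^o$, $\mathbf{1}\otimes\overline{\mathbf{1}}^o$, $\mathbf{1}\otimes\mathbf{2}^o$, $\overline{\mathbf{1}}\otimes\mathbf{2}^o$, $\mathbf{1}\otimes\mathbf{3}^o$, $\mathbf{1}\otimes\overline{\mathbf{3}}^o$, $\mathbf{2}\otimes\mathbf{3}^o$, $\mathbf{2}\otimes\overline{\mathbf{3}}^o$, $\mathbf{3}\otimes\overline{\mathbf{3}}^o$ (representations already present in $\H_{\mathrm{MSSM}}'$ can of course also appear with extra multiplicity), each coming with a choice of grading $\gamma_F=\pm1$. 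Since $\H_{\mathrm{MSSM}}'$ already fails C.4 by a fixed deficit (the computation in the previous lemma gives the mismatch explicitly once $a=b$ is forced), the job of $\mathcal{F}$ is to restore C.1--C.4: it must not spoil the anomaly relations \eqref{eq:SManom} (or must cancel its own anomaly internally), and it must shift the three GUT-coefficients $G_1,G_2,G_3$ so that $3G_1=10G_2=10G_3$ holds while keeping $a=b$ consistent with \eqref{eq:ab}. I would read off, from Table~\ref{tab:coeff}, exactly how each candidate representation (with either grading) contributes to $G_1$, $G_2$, $G_3$, to the anomaly polynomials \eqref{eq:SManoma}--\eqref{eq:SManomd}, and to $a$ and $b$ via \eqref{eq:ab}.

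The key structural observation to exploit is that $G_2$ and $G_3$ are already slightly too large relative to $G_1$ in the MSSM (this is essentially the content of the failed identity $18M_{13}+\dots = 40M_{13}+12$ once $a=b$), so $\mathcal{F}$ must contribute more to $G_1$ than to $10G_2/3$ and $10G_3/3$; but representations charged under $SU(2)$ or $SU(3)$ contribute to $G_2$ or $G_3$ as well, which is expensive. Hence the cheapest repairs use $U(1)$-only representations, i.e.\ $\mathbf{1}\otimes\mathbf{1}^o$ and $\mathbf{1}\otimes\overline{\mathbf{1}}^o$: the first is neutral and useless for $G_1$, the second contributes $8$ to $G_1$ and nothing to $G_2,G_3$, so some number of copies of $\mathbf{1}\otimes\overline{\mathbf{1}}^o$ is the natural first ingredient. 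I would solve the resulting scalar Diophantine conditions: demanding $3G_1=10G_2=10G_3$ fixes the required number of $\mathbf{1}\otimes\overline{\mathbf{1}}^o$ copies, but each such copy disturbs $a$ (it enters \eqref{eq:a} with coefficient $-2M_{\bar1\bar1}$), so \eqref{eq:ab}-consistency ($a=b$) then forces a compensating number of $\mathbf{1}\otimes\mathbf{1}^o$ copies — and these must be split evenly between the two gradings so as not to reintroduce a gravitational/gauge anomaly via the first and fourth lines of \eqref{eq:SManom} and the gravitational term in \eqref{eq:anomresult}. Carrying this through should produce exactly $\mathcal{F}=(\mathbf{1}\otimes\overline{\mathbf{1}}^o)^{\oplus4}\oplus(\mathbf{1}\otimes\mathbf{1}^o)^{\oplus2}$ with the stated grading condition, accounting for six representations. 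For the second solution \eqref{eq:NCMSSM-2} I would redo the search allowing a modest amount of $SU(3)$ charge: replacing some $\mathbf{1}\otimes\overline{\mathbf{1}}^o$ by $\mathbf{1}\otimes\mathbf{3}^o$ pairs changes $G_1$ and $G_3$ together, and the $SU(2)$-charged pair ${\bf 1}\otimes{\bf 2}\oplus{\bf \bar1}\otimes{\bf 2}$ (anomaly-free as a pair, by the same cancellation as for the higgsinos) adjusts $G_2$; the Diophantine system again pins down the multiplicities uniquely at six representations.

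Finally I would establish minimality: no extension with five or fewer extra representations works. Here I would argue by the linear-algebra structure of the four constraints — restoring the $G_1$ vs.\ $G_2,G_3$ balance requires a definite minimum ``amount'' of hypercharge-weighted content, restoring $a=b$ requires a definite amount of $U(1)$-charge counting, and avoiding new anomalies forces gradings to be paired — so that the contributions cannot all be packed into fewer than six irreducible summands. Concretely I expect to enumerate the (finitely many) multisets of at most five allowed representations-with-grading and check that each fails at least one of C.1--C.4; the bound on multiplicities comes from the fact that $G_2=G_3$ and $3G_1=10G_2$ are equalities, not inequalities, so arbitrarily large multiplicities are not available. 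The main obstacle I anticipate is precisely this exhaustiveness step: making the ``at most five extra representations is impossible'' claim airtight requires organizing the case analysis cleanly — grouping candidates by their $(G_1,G_2,G_3)$ and $(a,b)$ footprints and by grading parity — rather than a brute-force scan, and being careful that ``number of representations'' counts irreducible summands (so that, e.g., $(\mathbf{1}\otimes\overline{\mathbf{1}}^o)^{\oplus4}$ counts as four, not one), which is what makes six the honest minimum.
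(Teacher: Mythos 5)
Your plan is essentially the paper's proof: the paper disposes of this theorem with a one-line ``routine computer check'' on \eqref{eq:SManom} and \eqref{eq:gut2} over increasing multiplicities, and your proposal is precisely a (more articulated) finite search over the same constraint system, with the same recognition that the exhaustiveness of the ``at most five summands fail'' step is the real content. Two details in your heuristic bookkeeping are off, though neither derails the search itself. First, you claim that each copy of $\mathbf{1}\otimes\overline{\mathbf{1}}^o$ disturbs $a$ via a coefficient $-2$; but as the paper notes just below \eqref{eq:ab}, $M_{1\bar 1}$ drops out of \eqref{eq:a} entirely (the $-2$ coefficient you quote belongs to $M_{\bar1\bar1}$, i.e.\ to $\overline{\mathbf{1}}\otimes\overline{\mathbf{1}}^o$). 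The actual reason the two neutral copies of $\mathbf{1}\otimes\mathbf{1}^o$ are needed is that $\H'_{\mathrm{MSSM}}$ \emph{already} has $a\neq b$ (the gauginos shift $b$ by $6M_{33}$ but $a$ only by $2M_{11}$, as the last sentence of the preceding lemma's proof records), and $M_{11}$ is the only knob that raises $a$ without touching any $G_k$ or any anomaly equation. Second, the required grading split of the two $\mathbf{1}\otimes\mathbf{1}^o$ copies cannot be an anomaly condition as you suggest, since that representation has hypercharge $1-1=0$ and is blind to every line of \eqref{eq:SManom} and to the gravitational term; it is the same consistency requirement (compatibility of $\gamma_F$ with $J_F$ in KO-dimension $6$ on a self-conjugate summand) that forces the gauginos to come in oppositely graded pairs. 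With those two corrections your derivation of \eqref{eq:NCMSSM} and \eqref{eq:NCMSSM-2}, and the minimality argument, go through exactly as the paper intends.
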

\begin{proof}
This can be done with a routine computer check on the equations \eqref{eq:SManom} and \eqref{eq:gut2} while letting the multiplicities $M_{11}, M_{1 \bar 1}, M_{13}, M_{\bar 1 3}, M_{12}, M_{\bar 1 2}, M_{23}, M_{22}$ and $M_{33}$ increase.
\end{proof}

\end{document}